\newtheorem{Theorem}{Theorem}
\newtheorem{remark}{Remark}
\definecolor{myBlue}{RGB}{102, 140, 190}
\definecolor{myGreen}{RGB}{172, 196, 112}
\definecolor{myYellow}{RGB}{137, 133, 183}
\definecolor{myPurple}{RGB}{255, 220, 102}
\definecolor{orange}{rgb}{1.0, 0.647, 0.0}  
\definecolor{orange}{rgb}{1.0, 0.647, 0.0}  
\newtheorem{myDef}{Definition}
\def\BibTeX{{\rm B\kern-.05em{\sc i\kern-.025em b}\kern-.08em
    T\kern-.1667em\lower.7ex\hbox{E}\kern-.125emX}}
\begin{document}

\title{Personalized 3D Spatiotemporal Trajectory Privacy Protection with Differential and Distortion Geo-Perturbation}
\author{Minghui Min,~\IEEEmembership{Member,~IEEE}, Yulu Li, Gang Li, Meng Li,~\IEEEmembership{Senior Member,~IEEE}, Hongliang Zhang,~\IEEEmembership{Member,~IEEE},
\\Miao Pan,~\IEEEmembership{Senior Member,~IEEE}, and Zhu Han,~\IEEEmembership{Fellow,~IEEE}

\thanks{ \emph{Corresponding author: Minghui Min}.}
\thanks{ Minghui Min, Yulu Li, and Gang Li are with the School of Information and Control Engineering, China University of Mining and Technology, Xuzhou 221116, China (email: minmh@cumt.edu.cn; liyulu@cumt.edu.cn; ligang@cumt.edu.cn).}
\thanks{Meng Li is with the Key Laboratory of Knowledge Engineering with Big Data, Ministry of Education, Hefei University of Technology, Hefei 230009, China, also with the School of Computer Science and Information Engineering, Hefei University of Technology, Hefei 230009, China, also with the Anhui Province Key Laboratory of Industry Safety and Emergency Technology, Hefei 230601, China, also with the Intelligent Interconnected Systems Laboratory of Anhui Province (Hefei University of Technology), Hefei 230009, China, and also with the Department of Mathematics and HIT Center, University of Padua, Padua 35122, Italy (email: mengli@hfut.edu.cn).}
\thanks{Hongliang Zhang is with the School of Electronics, Peking University, Beijing 100871, China (email: hongliang.zhang92@gmail.com).}
\thanks {Miao Pan is with the Department of Electrical and Computer Engineering, University of Houston, Houston, TX 77004, USA (email: miaopan.ufl@gmail.com).}
\thanks{ Zhu Han is with the Department of Electrical and Computer Engineering, University of Houston, Houston, TX 77004, USA, and also with the Department of Computer Science and Engineering, Kyung Hee University, Seoul 446-70, South Korea (email: hanzhu22@gmail.com).}
}
\maketitle

\begin{abstract}
 The rapid advancement of location-based services (LBSs) in three-dimensional (3D) domains, such as smart cities and intelligent transportation, has raised concerns over 3D spatiotemporal trajectory privacy protection. However, existing research has not fully addressed the risk of attackers exploiting the spatiotemporal correlation of 3D spatiotemporal trajectories and the impact of height information, both of which can potentially lead to significant privacy leakage.
 To address these issues, this paper proposes a personalized 3D spatiotemporal trajectory privacy protection mechanism, named 3DSTPM. First, we analyze the characteristics of attackers that exploit spatiotemporal correlations between locations in a trajectory and present the attack model. Next, we exploit the complementary characteristics of 3D geo-indistinguishability (3D-GI) and distortion privacy to find a protection location set (PLS) that obscures the real location for all possible locations. To address the issue of privacy accumulation caused by continuous trajectory queries, we propose a Window-based Adaptive Privacy Budget Allocation (W-APBA), which dynamically allocates privacy budgets to all locations in the current PLS based on their predictability and sensitivity. Finally, we perturb the real location using the allocated privacy budget by the PF (Permute-and-Flip) mechanism, effectively balancing privacy protection and Quality of Service (QoS). Simulation results demonstrate that the proposed 3DSTPM effectively reduces QoS loss while meeting the user's personalized privacy protection needs.
\end{abstract}

\begin{IEEEkeywords}
Location-based service, 3D space, spatiotemporal trajectory, trajectory privacy protection, differential privacy.
\end{IEEEkeywords}

\section{Introduction}
\IEEEPARstart{T}{he} application of spatiotemporal trajectory data in 3D spaces has become increasingly widespread, particularly in areas such as smart cities\cite{Mekdad2023},\cite{lagorio2016research}, traffic management, and low-altitude logistics\cite{guo2022effects},\cite{10.1109/TIV.2024.3483889}. Users perform continuous location-based queries on trajectories to obtain location-based services (LBS)\cite{9674220}. However, this querying process exposes a large amount of user data, including sensitive information such as health status, personally identifiable information, and other private details\cite{10793114},\cite{9646501}. If mishandled or exposed, it can pose significant risks to users' personal privacy\cite{jiang2021location}.

There is a significant spatiotemporal correlation between different locations in spatiotemporal trajectory. Attackers with knowledge of these correlations can leverage this information to deeply analyze users' behavior patterns and activity regularities, thereby making more accurate predictions of users' real trajectories\cite{10684714}. For example, in a smart city scenario, if a student takes an autonomous taxi to school at 8 a.m., an attacker with knowledge of spatiotemporal correlations could, by combining this with the user’s historical behavioral data, infer that the taxi is likely to head towards the school in the subsequent time period, rather than the nearby bank. Such inferences may lead to serious privacy leakage, potentially exposing the student's daily routines and personal habits. Furthermore, as users perform continuous location-based queries, more trajectory data is incrementally exposed, this allows attackers to accumulate additional information, further enhancing their predictive capabilities and significantly amplifying the associated privacy risks\cite{qiu2023novel}. Therefore, to effectively mitigate privacy leakage risks and protect users' trajectory privacy, it is essential to fully account for spatiotemporal correlations and limit the cumulative privacy leakage caused by continuous queries.

Moreover, compared to 2D trajectory data, 3D spatiotemporal trajectory data is larger in scale, higher in dimensionality, and contains richer semantic information. This complexity leads to significant variations in the sensitivity of different locations. In addition, the sensitivity of a given location may dynamically change over time due to contextual or behavioral dynamics, causing users’ privacy protection needs to vary even for the same location at different times\cite{10167666,10258267,10273430}. Applying a uniform protection level to all locations may result in insufficient protection for high-sensitivity locations and excessive protection for low-sensitivity locations\cite{LIU2024101074},\cite{10.1145/3474839}. Hence, it is essential to provide personalized privacy protection measures for different locations on 3D spatiotemporal trajectory based on users' specific privacy protection needs.

The PIM mechanism was proposed in \cite{xiao2015protecting} to protect trajectory privacy, considering the temporal correlation between locations in the trajectory. However, it does not fully account for the correlations between trajectory locations, overlooking spatial factors such as geospatial constraints, and applying a uniform protection level to all locations, which does not meet users' individualized privacy needs. To overcome these shortcomings, a spatiotemporal personalized trajectory privacy protection mechanism PTPPM was proposed in \cite{cao2024protecting}, which more comprehensively considers the spatiotemporal correlations between trajectory locations and provides varying levels of protection based on users’ specific privacy requirements. Nevertheless, PTPPM overlooks the issue of cumulative privacy leakage caused by continuous queries along the trajectory, nor does it account for the potential variation in location sensitivity over time. Furthermore, the above mechanism is designed for 2D space and it is difficult to handle 3D spatiotemporal trajectory data that incorporates height information. Once attackers gain access to the height information of the trajectory, the risk of privacy leakage increases\cite{sym16091248, LI2021102323, zheng2022semantic}.

In 3D space, existing related mechanisms mainly focus on protecting individual location points, overlooking the spatiotemporal correlations between locations in a trajectory. Inspired by the concept of differential privacy\cite{dwork2006differential}, 3D geo-indistinguishability (3D-GI) was developed in \cite{9646489}, providing a rigorous method to quantify location privacy in 3D space, laying the foundation for 3D location privacy protection mechanisms. Building upon this, the complementary characteristics of 3D-GI and distorted privacy were analyzed, and a personalized location privacy protection mechanism, P3DLPPM, was proposed in \cite{10325612}, which enhances the robustness of the mechanism while meeting users' individualized needs. However, this mechanism is ineffective against attackers who exploit spatiotemporal correlations, directly applying it to trajectory privacy protection could lead to significant privacy leakage. Therefore, providing personalized privacy protection for 3D spatiotemporal trajectory remains an urgent challenge that needs to be addressed.

In this paper, we propose a personalized 3D spatiotemporal trajectory privacy protection mechanism, named 3DSTPM, to protect the user's trajectory privacy in 3D space. To achieve this, we generate a transfer probability matrix based on the user's historical behavior and data, quantifying the spatiotemporal correlation between trajectory locations, thereby resisting attackers who can analyze spatiotemporal characteristics. Furthermore, we combine the complementary properties of 3D-GI\cite{9646489} and distortion privacy (i.e., expected inference error)\cite{yu2017dynamic}. 3D-GI can limit the posterior information obtained by an attacker, but cannot completely defend against an attacker with prior knowledge, while distortion privacy ensures that the attacker's inference error remains above the threshold, but does not limit the exposure of a posteriori information. By combining the two, we improve the robustness of our mechanisms against attackers with knowledge of spatiotemporal information, while meeting the individual needs of our users. In addition, we dynamically adjust the privacy budget according to the user's personalized needs, and provide corresponding privacy protection for locations with different sensitivities, enabling more flexible privacy protection tailored to user-specific contexts.

The sliding window mechanism can effectively group continuous queries within a specified time interval\cite{qiu2023novel}, thereby limiting cumulative privacy leakage. Building upon this, we introduce the Window-based Adaptive Privacy Budget Allocation (W-APBA) algorithm to mitigate privacy leakage from continuous queries in 3D space. Specifically, the algorithm dynamically allocates privacy budgets to different locations based on their sensitivity at various times, meeting users' fluctuating privacy protection needs. Additionally, the sliding window mechanism constrains the total privacy budget within each window, further reducing cumulative leakage from consecutive queries along the trajectory.
\begin{figure*}[t]
	\centerline{\includegraphics[width=12.5cm]{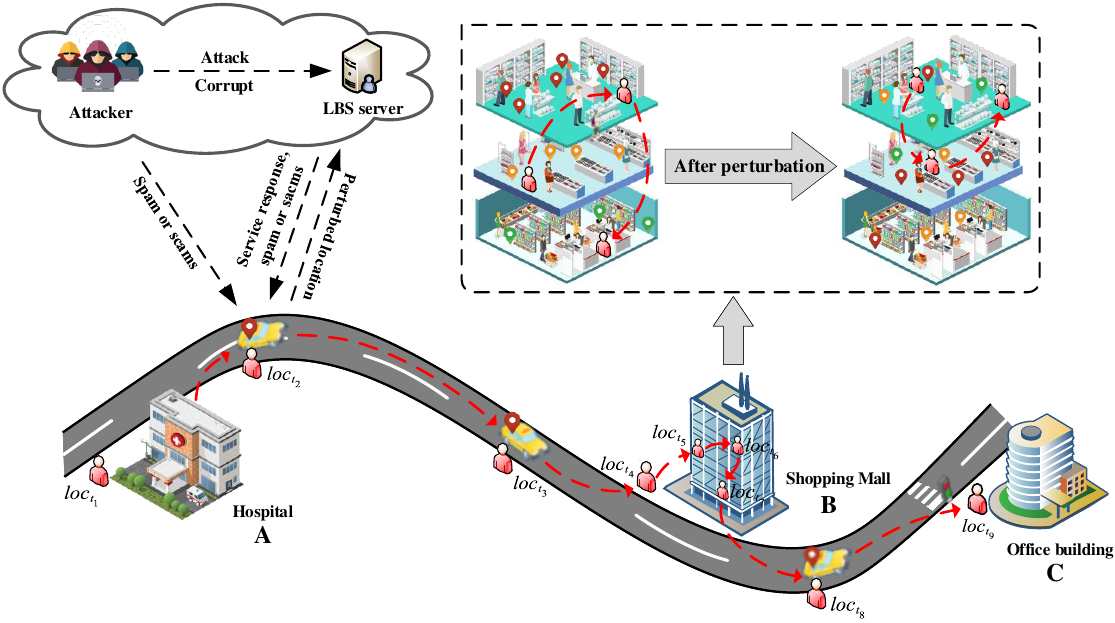}}
	\caption{Illustration of 3D spatiotemporal trajectory privacy protection, the user moves between three large buildings and requests LBS services in both indoor and outdoor scenarios. To prevent attackers from obtaining the user’s personal information from the obtained trajectory data, the user uploads perturbed locations to the LBS server to obtain the corresponding service.}
	\label{figl}
\vspace{-0.2cm}
\end{figure*}

Furthermore, building upon the PF (Permute-and-Flip) mechanism proposed in our previous work \cite{10325612}, we perturb the real locations at different timestamps along the trajectory. Additionally, we theoretically demonstrate that the distance between the real and perturbed locations generated by the PF mechanism is constrained by a controllable upper bound. This enhancement further strengthens the theoretical foundation of the PF mechanism in balancing user privacy protection requirements with QoS. Simulation results demonstrate that 3DSTPM provides personalized trajectory privacy protection, and offers superior privacy protection compared to P3DLPPM\cite{10325612}, 2DPTPPM\cite{cao2024protecting} and 3DPIM. The main contributions of our work include:
\begin{itemize}
\item We propose a personalized 3D spatiotemporal trajectory privacy protection mechanism, named 3DSTPM, to protect users' trajectory privacy in 3D space. This mechanism effectively tackles the challenges posed by the high dimensionality of 3D space, expanding its applicability. Besides, it considers the spatiotemporal correlations between locations on the trajectory, and integrates the complementary characteristics of 3D-GI and distorted privacy, enhancing the overall robustness of the protection.
\item We propose a W-APBA algorithm to meet the user's dynamically changing privacy protection needs. Specifically, the algorithm adapts to temporal variations in location sensitivity, thereby enabling more flexible privacy protection. It also incorporates a sliding window mechanism to limit the cumulative privacy leakage caused by continuous queries, ensuring robust privacy protection. Based on the allocated privacy budget, the perturbed location is generated using the PF mechanism, effectively balancing privacy protection and QoS.
\item We conduct simulations to study the effects of key privacy parameters, including privacy budget, expected inference error bound, and window size, on trajectory privacy and QoS loss. In addition, we compare and analyze the overall performance of 3DSTPM against benchmarks to demonstrate the advantage of our proposed mechanism.
\end{itemize}

The remainder of this paper is organized as follows. Section \ref{Related Work} reviews the related work. Section \ref{SYSTEM MODEL} presents the system model and we present the trajectory privacy protection statement in Section \ref{LOCATION PRIVACY NOTIONS AND TRAJECTORY FEATURES}. A 3DSTPM framework is proposed in Section \ref{PERSONALIZED TRAJECTORY PRIVACY PROTECTION MECHANISM}. The evaluation results are provided in Section \ref{SIMULATION RESULTS}. Finally, we conclude this work in Section \ref{CONCLUSION}.
\section{Related Work}\label{Related Work}
Existing research on privacy protection in 3D space primarily focuses on individual location protection\cite{10766641,qiu2020location,hu2023federated,10815979}, such as anonymity \cite{4359010}, location perturbation \cite{1250908}, and generalization\cite{glove2020privacy}. A mechanism based on anonymity, 3D Clique Cloak, was proposed in \cite{9705498}, to safeguard users' location privacy in 3D space. Nonetheless, such $k$-anonymity-based privacy protection mechanisms\cite{di2023k} rely heavily on trusted third parties, which introduce potential privacy leakage in the event of server failures or malicious attacks. To address this limitation, a 3D spatial personalized location privacy protection mechanism based on differential privacy, P3DLPPM, was proposed in \cite{10325612}, integrating the complementary characteristics of 3D-GI and distortion privacy to enhance the mechanism's robustness. However, these methods were designed specifically for location privacy protection and overlook spatiotemporal correlations between different locations. When directly applied to trajectory privacy protection, they cannot defend against attackers who exploit such correlations, increasing the risk of privacy leakage.

Spatiotemporal trajectory exhibits spatiotemporal correlations between locations, possibly exposing users' behavioral patterns. Once attackers exploit these correlations, they can accurately infer sensitive personal information \cite{10684714},\cite{shen2024bigru}. A PIM mechanism was proposed in \cite{xiao2015protecting} to protect trajectory privacy, considering the temporal correlations between locations on the trajectory. It hides the real location within a set of possible locations filtered based on prior knowledge, thereby reducing the accuracy of attackers' inferences. Nonetheless, this mechanism overlooks the spatial correlations between locations on the trajectory. On this basis, the concept of spatiotemporal event privacy was introduced in \cite{8946543}, which considers both temporal and spatial correlations between locations on the trajectory simultaneously and provides a quantifiable privacy metric. However, the above mechanisms do not consider the height information of the trajectory and are only suitable for 2D space. If directly applied to 3D spatiotemporal trajectory privacy protection, they will be unable to effectively defend against attackers who possess height information, leading to significant privacy leakage risks.

The sensitivity of different locations on a trajectory varies, and even the sensitivity of the same location may change over time. Applying uniform protection across all locations on the trajectory can not address personalized privacy protection needs of users\cite{cao2024protecting},\cite{qiu2020mobile}. By incorporating the temporal correlation of locations,  personalized privacy protection for different locations on the trajectory based on user needs is provided in \cite{cao2024protecting}. However, this mechanism only analyzes the impact of privacy parameters on the degree of personalized privacy protection using simplified models, without fully accounting for the factors that influence trajectory sensitivity. To address its limitations, a false location generation mechanism based on a semantic tree was proposed in \cite{qiu2020mobile} for personalized privacy protection, comprehensively considering the factors influencing location sensitivity. However, this mechanism does not account for the fact that the sensitivity of different locations on the trajectory may change over time, making it unable to adapt to the evolving privacy protection needs of users.
\section{System Model}\label{SYSTEM MODEL}
\begin{figure}[t]
	\centerline{\includegraphics[width=7.5cm]{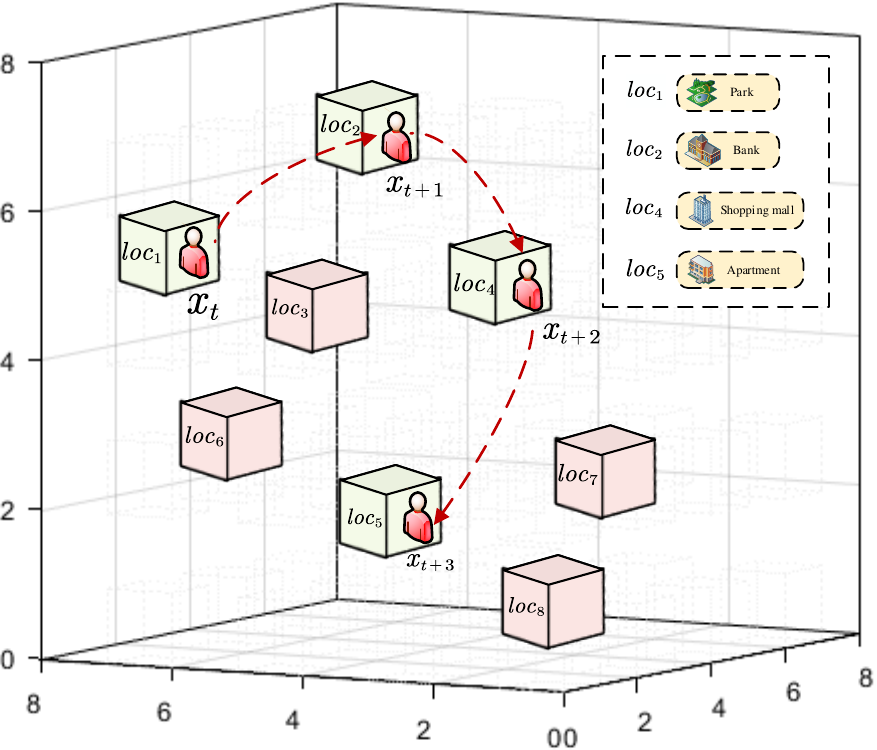}}
	\caption{User 3D spatiotemporal trajectory coordinate mapping and coordinate status.}
	\label{fig2}
\end{figure}

We consider an LBS system that consists of mobile users equipped with location enabled devices in 3D space (including both outdoor scenarios and indoor buildings), and an untrusted LBS server,  as shown in Fig. \ref{figl}. For outdoor scenarios, users share their location data with the LBS server to obtain services, such as vehicle users' navigation for live traffic information. For indoor scenarios, users can access LBS in high-rise indoor buildings for indoor navigation. Fig. \ref{figl} depicts the user's movement trajectory, with points A, B, and C representing three large buildings. The differently colored markers represent the potential locations of users across various floors in a building. Each floor offers distinct services and experiences tailored to the user's needs, similar to that in \cite{10325612}. The user, travelling by vehicle from one building to another, generates trajectory points denoted as $loc_{t_1},loc_{t_2},\ldots,loc_{t_9}$. Attackers may corrupt untrusted LBS servers to infer the user’s personal information from the obtained trajectory data, risking spam or fraud. To protect users' trajectory privacy, the user uploads perturbed locations, thereby reducing the risk of exposure to their actual trajectory. Important symbols are summarized in Table \ref{tab:2}.
\subsection{User Model}
We represent the user's trajectory in 3D space as $Tr=<\boldsymbol{x}_1,\boldsymbol{x}_2,\ldots,\boldsymbol{x}_t,\ldots,\boldsymbol{x}_T>$, where $\boldsymbol{x}_t$ denotes the user's 3D location at time $t$, and $T$ indicates the total length of the trajectory. The space encompassing all locations along the user's trajectory is divided into multiple subspaces, each corresponding to a unique 3D coordinate that represents a possible user location. The set of all possible locations of the user within the region is denoted as $\mathcal{A}$, where $
\mathcal{A} =\{loc_1,loc_2,\ldots ,loc_N\}
$, $N$ represents the total number of locations.
As shown in Fig. \ref{fig2}, $
\mathcal{A} =\{loc_1,loc_2,\ldots ,loc_{9}\}
$, and the user's location at time $t$ is given by $\boldsymbol{x}_t=loc_1=[2,6,5]$.

To protect privacy, the user applies a location perturbation mechanism, which transforms the actual location $\boldsymbol{x}_t$ from the original location set $O_1$ into a fake location $\boldsymbol{x}_{t}^{'}$ drawn from the perturbed location set $O_2$. The probability distribution for location perturbation, denoted as $f$, given by
\begin{equation}
	f\left( \boldsymbol{x}_{t}^{'}|\boldsymbol{x}_t \right) =\text{Pr}\left( O_2=\boldsymbol{x}_{t}^{'}|O_1=\boldsymbol{x}_t \right) ,\ \ \ \ \boldsymbol{x}_t,\boldsymbol{x}_{t}^{'}\in \mathcal{A}.
\end{equation}
\begin{table}[!t]
	\centering
	\caption{List of Notations.}
	\label{tab:2}
	\begin{tabular}{ll}
		\hline\hline\noalign{\smallskip}	
		Symbol & Description  \\
		\noalign{\smallskip}\hline\noalign{\smallskip}
        $\mathcal{A}$ & User map\\
        $D\left( \Phi _t \right)$ & Diameter of $\Phi _t$\\
        $E_m$ & Expected inference error bound\\
        $loc_i$ & Location $i$\\
        $LS_{i,t}$ & Location sensitivity of $loc_i$ at time $t$\\
        $LP_{i,t}$ & Location predictability of $loc_i$ at time $t$\\
        $\mathbf{M}$ & Location transition probability matrix\\
        $\mathbf{p}_{t}^{-}/\mathbf{p}_{t}^{+}$ & prior/posterior probability at time $t$\\
        $w$ & Size of sliding window\\
        $\boldsymbol{x}_t/\boldsymbol{x}_{t}^{'}/\boldsymbol{\hat{x}}_t$ & Real/perturbed/inferred location at time $t$\\
        $\epsilon _w$ & Total privacy budget of sliding window\\
        $\delta$ & Probability thresholds, $0<\delta <1$\\
        $\varDelta \chi _t$ & Possible location set at time $t$\\
        $\Phi _t$ & Protection location set (PLS) at time $t$\\
		\noalign{\smallskip}\hline
	\end{tabular}
\end{table}

We define $\mathbf{p}_t$ as the user’s location state at time $t$, and $\mathbf{p}_t\left[ i \right] =\text{Pr}\left( \boldsymbol{x}_t=loc_i \right)$, where $\boldsymbol{x}_t$ is the real location of the user at time $t$, $loc_i$ is the $i_{th}$ location of all possible locations of the user, $\mathbf{p}_t\left[ i \right] $ is the probability that the user is located $loc_i$ at time $t$. Assuming that the user has an equal probability of being at any given location, the set of locations $\mathcal{A}=\left\{ loc_1,\ loc_2,\ loc_6,\ loc_8 \right\}$, and the probability distribution of user locations is given by $\mathbf{p}_t=\left[ 0.25,0.25,0,0,0,0.25,0,0.25,\cdot \cdot \cdot ,0 \right]$. The prior and posterior probabilities of the user, denoted as $\mathbf{p}_{t}^{-}$ and $\mathbf{p}_{t}^{+}$, represent the probabilities before and after observing the released perturbed location $\boldsymbol{x}_{t}^{'}$, respectively.

\subsection{Attack Model}
Attackers can be either untrusted LBS servers or external entities attempting to compromise or attack local service system servers. We assume that attackers possess knowledge of spatiotemporal correlations\cite{cao2024protecting}. In addition to the prior probability distribution of user locations $\mathbf{p}_{t}^{-}$, and the probability distribution of perturbed locations $f\left( \boldsymbol{x}_{t}^{\prime}|\boldsymbol{x}_t \right)$, which are typically known to traditional attackers, they can also derive the user's location transition probability matrix $\mathbf{M}$ based on their historical behavior \cite{xiao2015protecting}. By leveraging this additional information, attackers can significantly enhance their ability to infer the user's real trajectory. The specific process by which attackers infer the user's real trajectory is as follows:

After observing the perturbed location posted by the user, the attacker can compute the posterior probability distribution based on the existing prior knowledge, i.e:
\begin{equation}\label{eq:3}
\mathbf{p}_{t}^{+}=\mathrm{Pr}\left( \boldsymbol{x}_t|\boldsymbol{x}_{t}^{\prime} \right) =\frac{\mathrm{Pr}\left( \boldsymbol{x}_t \right) f\left( \boldsymbol{x}_{t}^{\prime}|\boldsymbol{x}_t \right)}{\sum_{\boldsymbol{x}_t\in \mathcal{A}}{\mathrm{Pr}}\left( \boldsymbol{x}_t \right) f\left( \boldsymbol{x}_{t}^{\prime}|\boldsymbol{x}_t \right)}.
\end{equation}

A Bayesian attacker can employ an optimal inference attack, i.e., inferring the real location by minimizing the expected inference error of the posterior distribution that depends on the perturbed location\cite{yu2017dynamic}. Thus, the inferred location can be expressed as
\begin{equation}
\hat{\boldsymbol{x}}_t=\mathrm{arg}\min_{\hat{\boldsymbol{x}}_t\in \mathcal{A}} \sum_{\boldsymbol{x}_t\in \mathcal{A}}{\mathrm{Pr}}\left( \boldsymbol{x}_t|\boldsymbol{x}_{t}^{^{\prime}} \right) d_3\left( \hat{\boldsymbol{x}}_t,\boldsymbol{x}_t \right) ,
\end{equation}
where $d_3\left( \hat{\boldsymbol{x}}_t,\boldsymbol{x}_t \right) $ denotes the Euclidean distance between the attacker's inferred location and the user's real location in 3D space at time $t$.

The prior knowledge of an attacker who has knowledge of the spatiotemporal correlation between different locations on the user's trajectory is not static, and it can be constantly updated with the prior distribution probability of the user's location at the next time as
 \begin{equation}\label{eq:4}
\mathbf{p}_{t+1}^{-}=\mathbf{p}_{t}^{+}\mathbf{M}.
\end{equation}

The attacker updates the prior probability to compute the posterior probability for the next time step and derives the inferred location at time $t+1$ according to (\ref{eq:4}). By iteratively performing optimal inference on the user's location at each time step, the attacker can reconstruct the user's entire trajectory, thereby compromising their trajectory information.
\section{Trajectory Privacy Notions and Problem Statement}\label{LOCATION PRIVACY NOTIONS AND TRAJECTORY FEATURES}
In this section, we first list the trajectory features and mainly used trajectory privacy notions. Then, we present this paper’s problem statement.
\subsection{3D Spatiotemporal Trajectory}
The 3D spatiotemporal trajectory extends the traditional 2D trajectory model by incorporating additional height dimension, such as altitude or floor level, to more precisely represent a user's movement patterns over both space and time. Here, we provide the definition of a 3D spatiotemporal trajectory.
\begin{myDef}
\textnormal{\textbf{(3D Spatiotemporal Trajectory)}} A 3D spatiotemporal trajectory is defined as an ordered sequence of points that represent the movement of an object within a specific time interval in 3D space. It can be mathematically expressed as
\begin{equation}
Tr=<\boldsymbol{x}_1,\boldsymbol{x}_2,\ldots,\boldsymbol{x}_t,\ldots,\boldsymbol{x}_T>,
\end{equation}
 where $\boldsymbol{x}_t$ denotes the user's 3D location at time $t$, and $T$ represents the total length of trajectory. Consequently, the trajectory $Tr$ can also be formulated as
 \begin{equation}
 Tr=\{\boldsymbol{x}_t\mid t\in \{1,2,\dots ,T\}\}.
 \end{equation}
\end{myDef}
The locations at adjacent moments in the trajectory are spatially and temporally correlated\cite{liu2017spatiotemporal,qiu2023novel}. Specifically, the likelihood of a user appearing at a particular location at time $t$ is influenced by the location at the previous time step, and, in turn, affects the probability of the user being at a specific location at $t+1$. Moreover, two locations that are not geographically accessible to each other cannot appear consecutively at adjacent time steps. This limitation arises from the inherent spatial constraints, which restrict the feasibility of transitions between locations.

\subsection{Location Transition Probability Matrix}
The location transition probability matrix $\mathbf{M}$ represents the probability of a user transferring between two locations, which we assume that it remains constant over time. The element in the ith row and jth column of matrix $\mathbf{M}$ is $m_{ij}$, and $m_{ij}=\frac{n_{ij}}{\sum_j{n_{ij}}}$, where $m_{ij}$ denotes the probability of a user transferring from $loc_i$ to $loc_i$, $n_{ij}$ is an element in the location transfer matrix $\mathbf{N}$, which denotes the number of times the user has traveled from $loc_i$ to $loc_j$.

The matrix $\mathbf{M}$ can reflect the spatiotemporal correlation between the locations on the trajectory, and the prior probability distribution $\mathbf{p}_{t}^{-}$ of the user at time $t$ can be computed from the posterior probability distribution $\mathbf{p}_{t-1}^{+}$ at time $t-1$ and the matrix $\mathbf{M}$. Moreover, when the two locations are not geographically spatially reachable $m_{ij}=0$.

\subsection{Differential Geo-Perturbation in 3D Space}\label{AA}
3D geo-indistinguishability (3D-GI) is a privacy concept proposed in\cite{9646489}, which extends geo-indistinguishability to 3D space. It ensures that the perturbed locations generated for any two locations within a sphere of radius $r$ have similar probability distributions, making it geographically indistinguishable to an attacker for all locations within it. In the sequel, we define the sphere as a protected area, and all locations within it constitute protected location sets (PLS). Expanding on this notion, we enhance the initial 3D-GI, originally tailored for singular locations, to encompass trajectories spanning multiple timestamps. Next, we present a refined definition of 3D-GI at each timestamp along the trajectory.
\begin{myDef}
\textnormal{\textbf{(3D Geo-Indistinguishability, 3D-GI)}} For a user’s possible location set $D_t \in \mathbb{R}^3$ at time $t$, and any given locations $\boldsymbol{x}_t$ and $\boldsymbol{y}_t$, where $t\in [1,2,\ldots ,T]$, the perturbation mechanism $\mathcal{M}_t$ satisfies 3D-GI if and only if the following inequality holds:
\begin{equation}\label{eq:7}
\mathrm{Pr[}\mathcal{M} _t(\boldsymbol{x}_{t}^{\prime}|\boldsymbol{x}_t)]\le e^{\epsilon d_3(\boldsymbol{x}_t,\boldsymbol{y}_t)}\mathrm{Pr[}\mathcal{M} _t(\boldsymbol{x}_{t}^{\prime}|\boldsymbol{y}_t)].
\end{equation}
\end{myDef}
By defining $ \epsilon_t = \epsilon d_3(\boldsymbol{x}_t, \boldsymbol{y}_t) $, where $d_3(\boldsymbol{x}_t, \boldsymbol{y}_t)$ denotes the Euclidean distance between $\boldsymbol{x}_t$ and $\boldsymbol{y}_t$ in 3D space, we establish that the mechanism $\mathcal{M}_t$ satisfies $ \epsilon_t $-differential privacy, as stated in (\ref{eq:7}). This ensures that the probability distributions of the perturbed locations corresponding to the real locations $\boldsymbol{x}_t$ and $\boldsymbol{y}_t$ remain similar, thereby limiting an attacker's ability to accurately infer the user's real location based solely on the observed perturbed data. The degree of similarity between the perturbed and real locations is governed by the parameter $ \epsilon_t $, where a smaller $ \epsilon_t $ value indicates a higher level of privacy protection for the user's location information.

According to (\ref{eq:7}) we derive
\begin{equation}
\mathrm{Pr(}\boldsymbol{x}_t|\boldsymbol{x}_{t}^{\prime})\le e^{\epsilon _t}\frac{\mathrm{Pr(}\boldsymbol{x}_t)}{\sum_{\boldsymbol{y}_t\in D_t}{\mathrm{Pr(}\boldsymbol{y}_t)}}.
\end{equation}

This result indicates that, regardless of the prior knowledge an attacker possesses, 3D-GI bounds the multiplicative distance between posterior and prior probabilities within $ \epsilon_t $, effectively limiting the attacker's ability to infer posterior information.
\subsection{Distortion Geo-Perturbation (Expected Inference Error)}\label{SCM}
The expected inference error\cite{yu2017dynamic} is one of the key methods for quantifying the effectiveness of trajectory privacy protection. It measures the effectiveness of privacy protection by evaluating the level of error an attacker might incur when attempting to infer the user's real location. The definition is described as follows:
\begin{myDef}
\textnormal{\textbf{(Expected Inference Error)}} An expected inference error can be categorized into the conditional expected inference error and unconditional expected inference error. The conditional expected inference error refers to the expected error incurred by an attacker with prior knowledge attempting to infer the real location of the user at time $t$. In contrast, the unconditional inference expected error represents the average error of an attacker who lacks prior knowledge and attempts to infer the user's location at time $t$ in all scenarios.
\end{myDef}
We assume that the attacker who has the knowledge of spatiotemporal correlation can observe all perturbed locations posted at previous moments. Under such conditions, the conditional expected inference error is defined as
\begin{equation}\label{eq:9}
ExpEr\left( \boldsymbol{x}_{t}^{^{\prime}} \right) =\min_{\hat{\boldsymbol{x}}_t\in \mathcal{A}} \sum_{\boldsymbol{x}_t\in \mathcal{A}}{\mathrm{Pr}\left( \boldsymbol{x}_t|\boldsymbol{x}_{t}^{^{\prime}} \right) d_3\left( \hat{\boldsymbol{x}}_t,\boldsymbol{x}_t \right) .}
\end{equation}

When the attacker possesses sufficient prior knowledge, their inferred location is constrained to include the user’s real location within the PLS, minimizing the inference error and maximizing the user's privacy protection level\cite{10325612}. The inference error is then given by
\begin{equation}
\begin{split}
&\min_{\hat{\boldsymbol{x}}_t\in \mathcal{A}} \sum_{\boldsymbol{x}_t\in \Phi _t}{\frac{\mathrm{Pr}\left( \boldsymbol{x}_t|\boldsymbol{x}_{t}^{\prime} \right)}{\sum_{\boldsymbol{y}_t\in \Phi _t}{\mathrm{Pr}\left( \boldsymbol{y}_t|\boldsymbol{x}_{t}^{\prime} \right)}}d\left( \hat{\boldsymbol{x}}_t,\boldsymbol{x}_t \right)}\\
	&\ge e^{-\epsilon}\min_{\hat{\boldsymbol{x}}_t\in \Phi _t} \sum_{\boldsymbol{x}_t\in \Phi _t}{\frac{\mathrm{Pr}\left( \boldsymbol{x}_t \right)}{\sum_{\boldsymbol{y}_t\in \Phi _t}{\mathrm{Pr}\left( \boldsymbol{y}_t \right)}}d\left( \hat{\boldsymbol{x}}_t,\boldsymbol{x}_t \right).}\\
\end{split}
\end{equation}

\subsection{$\delta$-Location Set}
To enhance the protection of locations frequently visited by users, the $\delta$-location set is introduced, as proposed in \cite{xiao2015protecting}. This set represents the collection of locations where the user is most likely to appear at time $t$, and we denote it as $\varDelta \chi _t$. Specifically, the $\delta$-location set $\varDelta \chi _t$ is defined as a set containing the minimum number of locations at time $t$ with a prior probability sum not less than $1-\delta $, where $0<\delta <1$. Formally, it is defined as
\begin{equation}\label{eq:11}
\varDelta \chi _t=\min \left\{ loc_i|\sum_{loc_i}{\mathbf{p}_{t}^{-}\left[ i \right]}\ge 1-\delta \right\}.
\end{equation}
Since the $\delta$-location set includes locations with a high probability of user presence at time $t$, there exists a small probability that the user's real location $\boldsymbol{x}_t$ of the user be excluded. In such a case, we substitute the closest location $\boldsymbol{\tilde{x}}_t$ for the real location $\boldsymbol{x}_t$, given by
\begin{equation}
\boldsymbol{\tilde{x}}_t=\underset{\boldsymbol{\tilde{x}}_t\in \varDelta \boldsymbol{\chi }_t}{\text{arg}\min}\ d\left( \boldsymbol{\tilde{x}}_t,\boldsymbol{x}_t \right).
\end{equation}
In this framework, it is considered protected if the user's real location $\boldsymbol{x}_t$ belongs to the $\delta$-location set (i.e., $\boldsymbol{x}_t\in \varDelta \chi _t$). Otherwise, the closest location $\boldsymbol{\tilde{x}}_t$ serves as the protected location to mitigate potential location inference risks.
\subsection{Concepts Related To Privacy Budget Allocation}
To more comprehensively and flexibly meet the personalized privacy protection needs of users at different times and locations, we consider multiple factors that influence the sensitivity of users' trajectories during the privacy budget allocation process. Based on this, we define several key concepts closely related to privacy budget allocation.
\begin{myDef}
\textnormal{\textbf{($\boldsymbol{w}$-Trajectory Sequence Differential Privacy)}} Consider a user's trajectory denoted as $Tr=\left\{ \boldsymbol{x}_1,\boldsymbol{x}_2,\boldsymbol{x}_3,\ldots,\boldsymbol{x}_T \right\} $. For any continuous sequence of $w$-timestamped location subsets
 $\left\{ \boldsymbol{x}_{i-w+1},\boldsymbol{x}_{i-w+2},\ldots,\boldsymbol{x}_i \right\} $, if the cumulative privacy budget for this sequence satisfies (\ref{eq:13}), then the user's trajectory privacy is said to satisfy the $w$-trajectory sequence differential privacy, which is given by
\begin{equation}\label{eq:13}
\forall i\in[n],\sum_{k=i-w+1}^i\epsilon_k\leqslant\epsilon,
\end{equation}
where $\boldsymbol{x}_t$ represents the user's real location, which is perturbed to $\boldsymbol{x}_{t}^{\prime}$ by the privacy-preserving mechanism $\mathcal{M}_i$, which satisfies $\epsilon _i$-differential privacy.
\end{myDef}
 \begin{myDef}
\textnormal{\textbf{(Location Predictability)}} Location predictability $LP_{i,t}$ quantifies the predictability of the user's real location at time $t$, given by
\begin{equation}
LP_{i,t}=\frac{1}{1+\sum_{j=1}^n{\mathrm{p}_{t}^{-}\left( loc_i \right) \mathrm{p}_{t}^{-}\left( loc_j \right) d_3\left( loc_i,loc_j \right)}},
\end{equation}
where $\mathrm{p}_{t}^{-}\left( loc_i \right)$ and $\mathrm{p}_{t}^{-}\left( loc_j \right) $ denote the probabilities that, at time $t$, the user's real location is $loc_i$ and the attacker's inferred location is $loc_j$, respectively. $d_3\left( loc_i,loc_j \right)$ denotes the Euclidean distance between $loc_i$ and $loc_j$ in 3D space.
\end{myDef}
 \begin{myDef}
\textnormal{\textbf{(Location Sensitivity)}} Location sensitivity $LS_{i,t}$ represents the sensitivity of $loc_i$, taking into account the user's sojourn time, visit frequency of $loc_i$, and the semantic sensitivity of $loc_i$, given by
\begin{equation}
LS_{i,t}=\gamma _tT_i+\gamma _{_f}F_i+\gamma _{_s}Sen_i\cdot (1+\lambda I_{\mathrm{user}}),
\end{equation}
where $T_i$ and $F_i$ denote the user's sojourn time and visit frequency at $loc_i$, respectively. These values can be derived from the user's historical trajectory data.
$Sen_i$ denotes the semantic sensitivity of $loc_i$ to the user, which can be configured based on the potential risk of privacy exposure with the semantic context of the location. Furthermore, individual privacy protection requirements can vary even within the same location, depending on user-specific attributes. For instance, the sensitivity attributed to a hospital location is typically lower for a physician compared to a patient. The characteristic adjustment coefficient, $I_{\text{user}}$, is introduced to account for such distinctions. For instance, in the scenario where a patient visits a hospital, $I_{\text{user}} = 0.8$; conversely, for a doctor visiting the same hospital, $I_{\text{user}} = 0.1 $. The parameter $\lambda$ represents the weight of $I_{\text{user}}$, determining the extent of its impact.

The coefficients
$\gamma _t$, $\gamma _f$, and $\gamma _s$ indicate the respective weights assigned to sojourn time, visit frequency, and location semantic sensitivity in the calculation of location sensitivity, which satisfy
\begin{equation}
 \gamma _t+\gamma _{_f}+\gamma _s=1.
 \end{equation}
This allows the users to adjust the weighting parameters according to their specific needs. For example, for a patient, although the sojourn time and visit frequency at the hospital may be short, the hospital itself could still reveal sensitive health information, leading to a higher semantic sensitivity for the patient. Consequently, lower values can be assigned to $\gamma_t$ and $\gamma_f$, while a higher value is assigned to $\gamma_s$.
\end{myDef}

\begin{figure*}[htbp]
\centerline{\includegraphics[width=17cm, keepaspectratio]{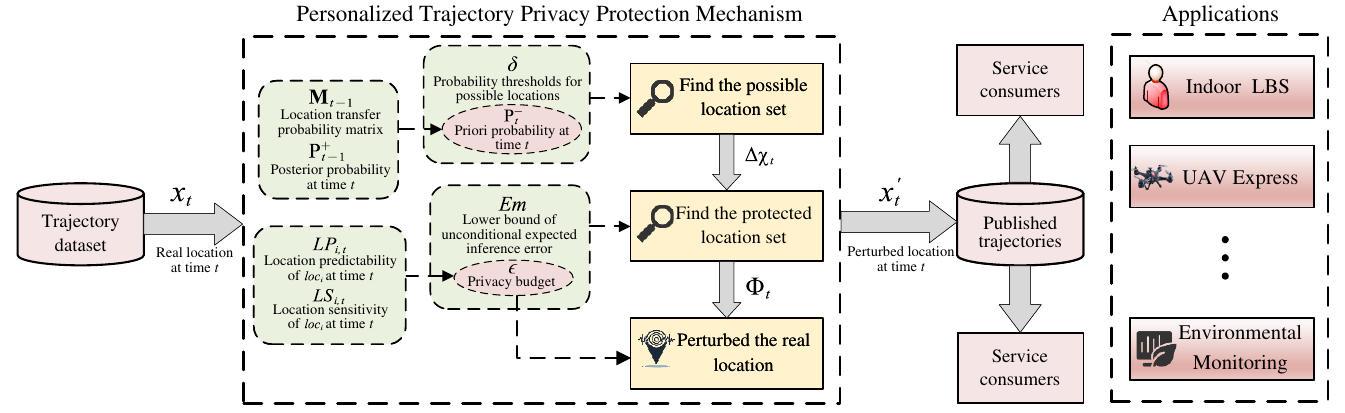}}
	\caption{The framework of 3DSTPM. 3DSTPM provides personalized trajectory privacy protection and is applicable to various scenarios. It operates in three stages: finding the possible location set, finding the protected location set, and perturbing the real location based on the privacy budget allocated according to the user's location sensitivity at different times.}
	\label{fig3}
\vspace{-0.2cm}
\end{figure*}
\subsection{Problem Statement}
  Compared to 2D space, incorporating height information in 3D space leads to a larger data scale and richer semantic content\cite{10325612}. This increases data processing complexity and poses challenges in establishing location sensitivity. In addition, unlike protecting singular locations' privacy, neglecting spatiotemporal correlations makes it easier for attackers to exploit this knowledge to infer users' sensitive information\cite{cao2024protecting}, heightening privacy breach risks. Thus, 3D spatiotemporal trajectory privacy protection must account for the dynamic changes in prior distributions across different trajectory locations due to spatiotemporal correlations. Directly applying existing mechanisms to 3D spatiotemporal trajectory privacy cannot handle 3D spatiotemporal trajectory data. Therefore, we focus on protecting the privacy of 3D spatiotemporal trajectory in this work.

 Significant spatiotemporal correlations often exist between locations on a trajectory, which attackers can exploit to enhance inference accuracy by updating their prior knowledge\cite{sym16091248}. For example, by utilizing a known user location at a given time and the transition probability matrix between locations, an attacker can infer the user's potential locations at subsequent time steps, posing a greater threat to user privacy. Furthermore, the sensitivity of different locations along a trajectory varies, and users' privacy protection needs change accordingly. Even for the same location, the sensitivity may fluctuate due to contextual changes. In addition, as continuous location-based queries are made along the trajectory, privacy accumulation issues may arise\cite{zhang2023personalized}. Continuous queries gradually expose the user's behavior patterns, significantly increasing the risk of privacy leakage.

To resolve this issue, we incorporate spatiotemporal correlation into our privacy protection strategy. By dynamically updating the prior probabilities of the user's potential locations at different times, we construct a possible location set $\Delta \chi_t$ for the user based on these probabilities and a predefined probability threshold. All locations within $\Delta \chi_t$ are subsequently protected. We leverage the complementary properties of 3D-GI and distortion privacy, employing a minimum distance search algorithm based on the 3D Hilbert curve to determine the protected location set $\Phi _t$ for each time step in the trajectory. Additionally, we adjust the privacy parameters $E_m$ and $\epsilon$ to cater to users' personalized privacy requirements, which is influenced by factors such as sojourn time, visit frequency, and location semantic sensitivity. To more flexibly meet users' personalized needs while addressing the privacy accumulation issue caused by continuous queries along the trajectory, we propose a W-APBA algorithm, which effectively considers location sensitivity attributes.
Moreover, to protect privacy while meeting users' QoS requirements, we apply the PF mechanism\cite{10325612} to perturb the real locations.
\section{Personalized Trajectory Privacy Protection Mechanism}\label{PERSONALIZED TRAJECTORY PRIVACY PROTECTION MECHANISM}
\vspace{-0.2cm}
\subsection{Overview}
\begin{algorithm}[t]
   \caption{Personalized 3D spatiotemporal trajectory privacy protection mechanism (3DSTPM)} \label{algorithm 1}
   \begin{small}
   \BlankLine
  \KwIn{$\mathbf{p}_{t-1}^{+}$, $\delta$, $E_m$, $Tr = [\boldsymbol{x}_1,\boldsymbol{x}_2,\ldots,\boldsymbol{x}_T]$, $\epsilon_s$, $\Delta \epsilon$, $\mathbf{M}$, and $w$}
   \KwOut{$\mathbf{p}_{t}^{+}$, $\boldsymbol{x}_{t}^{^{\prime}}$}
       $\mathbf{p}_{t}^{-}$=$\mathbf{p}_{t-1}^{+}$$\mathbf{M}$;\\
       $\epsilon_{i,t} = \mathbf{Algorithm 1}(\epsilon_s,\Delta \epsilon,\mathbf{M},w)$;\\
       Determine $\Delta \chi_t$ from $\mathbf{p}_{t}^{-}$, $\delta$;\\
   \If { $\boldsymbol{x}_t \notin \Delta \chi_t$ }{
        $\boldsymbol{x}_t = \tilde{x}_t$ via (11);}

    \For{$loc_i \in \Delta \chi_t, \; i = 1 \ \textbf{$\mathbf{to}$} \; \text{size of } \Delta \chi_t$  }{
        Find PLS $\Phi_t$ via Multiple rotated 3D Hilbert curve\cite{10325612};\\
        Release perturbed locations $\boldsymbol{x}_{t}^{^{\prime}}$ by PF mechanism;
        }
        Update $\mathbf{p}_{t}^{+}$ via (2);\\
        Go to next timestamp;\\
    \end{small}
\end{algorithm}
In this section, we propose a personalized 3D spatiotemporal trajectory privacy protection mechanism, named 3DSTPM, as illustrated in Fig. \ref{fig3}. This mechanism is designed to protect the trajectory privacy of objects moving in 3D space, which also remains applicable to 2D space by setting the height coordinate to zero. To build this mechanism, we consider the spatiotemporal correlations between different locations in a 3D spatiotemporal trajectory and provide personalized trajectory privacy protection for users by integrating 3D-GI with expected inference error.
Specifically, we first extract the location transition probability matrix from the user's historical trajectory dataset, and then, by combining the prior probability of each location with the probability threshold for possible locations, determine the possible location set. Next, by integrating the complementary characteristics of 3D-GI and distorted privacy, we identify a PLS for all locations within the possible location set, thereby enhancing the robustness of the mechanism while meeting users' personalized privacy needs. To enhance user QoS, we utilize a multiple-rotated 3D Hilbert curve to find the PLS with the smallest diameter. Additionally, we employ PF mechanism \cite{10325612} to generate a perturbed location $\boldsymbol{x}_{t}^{'}$ for each location in the PLS, achieving a balance between trajectory privacy protection and QoS. The process of the 3DSTPM is summarized in Algorithm \ref{algorithm 1}.
\begin{figure*}[htbp]
\vspace{0 cm}
	\centerline{\includegraphics[width=10cm]{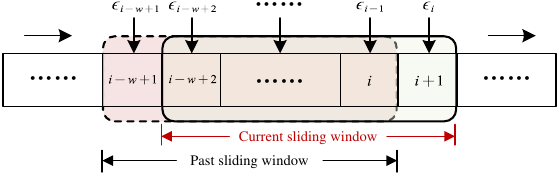}}
	\caption{$w$-sliding window.}
	\label{fig4}
\vspace{-0.4cm}
\end{figure*}

Furthermore, the sensitivity of different locations along a trajectory varies and may change dynamically over time, resulting in varying privacy protection requirements for users across different times and locations. In addition, continuous location-based queries along the trajectory lead to the gradual accumulation of privacy leakage, significantly increasing privacy risks. To overcome these challenges, we propose a novel privacy budget allocation algorithm, fully considering the sensitivity variations of locations over time and limiting the cumulative privacy leakage caused by continuous queries.
\subsection{Determine $\varDelta \chi_t$ at Continuous  Times}
To determine the possible location set $\varDelta \chi_t$ at time $t$, we first eliminate all impossible locations where $\mathbf{p}_{t}^{-}$ is minimal or equal to zero, following the criteria specified in (\ref{eq:11}). This ensures that frequently visited locations of users are effectively protected, thereby enhancing privacy protection. If the real location at time $t$ is removed, it is replaced with $\boldsymbol{\tilde{x}}_t$ to mitigate potential location inference risks.

Next, the posterior probability $\mathbf{p}_{t}^{+}$ is calculated according to (\ref{eq:3}), considering the spatiotemporal correlations between locations along the trajectory to enhance robustness against attackers who exploit such correlations. Using this posterior probability, the prior probability $\mathbf{p}_{t+1}^{-}$ for the next time step is then updated based on (\ref{eq:4}). From the updated $\mathbf{p}_{t+1}^{-}$ we derive $\varDelta \chi _{t+1}$. The size of $\varDelta \chi _{t+1}$ is determined by setting the value of $\delta$, which can be adjusted based on the user's privacy protection needs. This process is iteratively applied to obtain $\varDelta \chi _t$ for consecutive time steps.

\vspace{-0.2cm}
\subsection{Determine Protection Location Set}
After obtaining $\Delta \chi _t$ of the user at the current moment, we provide protection for all locations within $\Delta \chi _t$ by finding a PLS $\Phi _t$ for each possible location. 3D-GI can effectively limit attackers from further inferring users' personal information by observing perturbed locations released at different times. However, it cannot restrict the attacker's expected inference error. On the other hand, the distortion privacy can limit the attacker's inference accuracy, but its effectiveness depends heavily on the attacker’s prior knowledge. To leverage the advantages of both approaches, we integrate them to enhance the robustness against attackers with knowledge of spatiotemporal correlation information, constraining the attacker's inference error while independent of prior knowledge.

For any observed perturbed location, the conditional expectation error is formulated as\cite{10325612}:
\begin{equation}
E\left( \Phi _t \right) =\min_{\hat{\boldsymbol{x}}_t\in \mathcal{A}} \sum_{\boldsymbol{x}_t\in \Phi _t}{\frac{\mathrm{Pr}\left( \boldsymbol{x}_t \right)}{\sum_{\boldsymbol{y}_t\in \Phi _t}{\mathrm{Pr}\left( \boldsymbol{y}_t \right)}}d_3\left( \hat{\boldsymbol{x}}_t,\boldsymbol{x}_t \right) .}
\end{equation}
In the worst case, the attacker's inferred location is narrowed down to an arbitrary location within the PLS containing the user's real location, the specific formula is given by\cite{10325612}
\begin{equation}
E\left( \Phi _t \right) =\min_{\hat{\boldsymbol{x}}_t\in \mathcal{A}} \sum_{\boldsymbol{x}_t\in \Phi _t}{\frac{\mathrm{Pr}\left( \boldsymbol{x}_t \right)}{\sum_{\boldsymbol{y}_t\in \Phi _t}{\mathrm{Pr}\left( \boldsymbol{y}_t \right)}}d_3\left( \hat{\boldsymbol{x}}_t,\boldsymbol{x}_t \right) .}
\end{equation}
From (\ref{eq:7}), we obtain
\begin{equation}
ExpEr(\boldsymbol{x}_{t}^{\prime})\ge e^{-\epsilon}E(\Phi _t).
\end{equation}
Here, we introduce the privacy parameter $E_m$, which allows users to determine the unconditional expectation error lower term, i.e. $e^{-\epsilon}E(\Phi _t)\geqslant E_m$, according to their privacy preserving needs. To ensure this constraint is satisfied, we impose the following condition:
\begin{equation}
\setlength{\abovedisplayskip}{0.04cm}
\setlength{\belowdisplayskip}{0.01cm}
E(\Phi _t)\geqslant e^{\epsilon}E_m.
\end{equation}

The locations within the PLS $\Phi _t$ can satisfy the condition:
\begin{equation}
 ExpEr(\boldsymbol{x}_{t}^{\prime})\ge e^{-\epsilon}E(\Phi _t),
 \end{equation}
 indicating that a larger PLS generally provides stronger privacy protection. However, increasing the size of the PLS also results in greater QoS loss. To better balance trajectory privacy protection and QoS, we build upon the multi-rotation 3D Hilbert curves proposed in \cite{10325612}, and integrate the prior probability distribution that dynamically varies over time into the search process, so as to construct minimum-diameter PLS for all possible locations on the trajectory that better align with its spatiotemporal correlation characteristics. For the $\boldsymbol{x}_t$ in $\Delta \chi _t$, along the multi-rotation 3D Hilbert curve direction traverses all possible locations of the user and selects the sphere with the smallest diameter as the location protection set PLS $\Phi _t$.

Furthermore, let $D\left( \Phi _t \right)$ denote diameter within the PLS, which is defined as the maximum distance between any two locations within the PLS at time $t$, i.e.,
$D\left( \Phi _t \right) \geqslant d\left( \boldsymbol{x}_t,\hat{\boldsymbol{x}}_t \right)$. Under this definition, we derive the following bound:
\begin{equation}\label{eq:22}
e^{\epsilon}E_m\le E(\Phi _t)\le \min_{\hat{\boldsymbol{x}}_t\in \Phi _t} \sum_{\boldsymbol{x}_t\in \Phi _t}{\frac{\mathrm{Pr}(\boldsymbol{x}_t)}{\sum_{\boldsymbol{y}_t\in \Phi _t}{\mathrm{Pr} (\boldsymbol{y}_t)}}\mathrm{D(}\Phi _t)}=\mathrm{D(}\Phi _t).
\end{equation}
Thus, the diameter of the PLS must be no smaller than $e^{\epsilon}E_m$.

\vspace{-0.2cm}
\subsection{Privacy Budget Allocation Mechanism}
To address the dynamic variation in location sensitivity so as to meet users’ personalized requirements and reduce cumulative privacy leakage arising from continuous location-based queries, we introduce the W-APBA algorithm, which dynamically allocates privacy budgets to different locations along the trajectory. The $w$-sliding window refers to the sequence of locations with timestamps of length $w$ that are continuously queried by the user, as illustrated in Fig. \ref{fig4}.

We aim to protect all complete trajectories that contain the user's possible locations. To achieve this, the user first sets the sliding window size $w$ and the total privacy budget $\epsilon _s$ for all possible locations within the sliding window, based on their privacy protection needs. To ensure the practical usability of the method, the value of $w$ should be smaller than the length of the trajectory. On this basis, we assign an initial privacy budget $\epsilon _r=\small{\frac{\epsilon _s}{w\cdot n}}$ to each possible location within $w$ window, where $n$ denotes the number of possible locations of the user for each timestamp. Additionally, we define $\epsilon _{w}=\frac{\mathrm{\epsilon}_{\mathrm{s}}}{n}$ as the total privacy budget allocated to $w$ sliding window for a single specific trajectory.

The attacker gradually grasps the spatiotemporal correlation between different locations in the trajectory by continuously accumulating data on the user's behavioral patterns, travel habits, and other contextual information. This iterative update enhances the attacker's prior knowledge, leading to dynamic variations in the predictability of different locations over time. Moreover, the location sensitive fluctuates depending on temporal and environmental contexts, resulting in users' varying privacy protection requirements for different locations within the trajectory at different moments.

To accommodate these dynamically changing privacy requirements, we introduce the privacy increment $\Delta \epsilon $ and the privacy budget control coefficient $\lambda$, which enable the system to dynamically allocate privacy budgets to potential locations within the trajectory at different moments. This approach provides more precise control over privacy protection strength. Based on this framework, we define the privacy budget for location $i$ at time $t$ as
\begin{equation}
\epsilon _{i,t}=\epsilon _r-\lambda _{i,t}\Delta \epsilon,
\end{equation}
where $\Delta \epsilon =\small{\frac{\epsilon_s}{2}}$, and $\lambda _{i,t} $ is determined by location predictability and location sensitivity, and we define $\lambda _{i,t} $ as
\begin{equation}
\lambda _{i,t}=\alpha _1LP_{i,t}+\alpha _2LS_{i,t}.
\end{equation}

Here, $LP_{i,t}$ is the predictability of the user's real location $\boldsymbol{x}_t$, with higher predictability requiring stronger privacy protection. However, predictability alone does not fully capture user's privacy preferences. Even when two locations exhibit the same predictability, users may have different privacy needs. To better accommodate users' personalized preferences, we also incorporate location sensitivity $LS_{i,t}$, which quantifies the sensitivity of $loc_i$. The coefficients $\alpha _1$ and $\alpha _2$ represent the weights of $LP_{i,t}$ and $LS_{i,t}$, respectively, and can be adjusted by users based on their individual privacy preferences.

To further address the cumulative privacy leakage caused by continuous queries along the trajectory, the sliding window mechanism dynamically limits the total privacy budget that can be allocated at a given time. Specifically, the maximum privacy budget available for the current location is defined as
\begin{equation}\label{eq:20}
\vspace{-0.1cm}
\epsilon _{\max ,t}=\epsilon _{w}-\sum_{k=i-w+1}^{i-1}{\epsilon _k},
\end{equation}
where $\epsilon _k$ represents the privacy budget allocated to the user's real location at time $k$.
\begin{algorithm}[t]
\vspace{0 cm}
   \caption{Window-based Adaptive Privacy Budget Allocation (W-APBA)} \label{algorithm 2}
   \begin{small}
   \BlankLine
   \KwIn{$\epsilon_s$, $\Delta \epsilon$, $\mathbf{M}$, and $w$}
   \KwOut{privacy budget: $\epsilon _{i,t}$}
   \For{location $loc_i$, $i = 1$ \textbf{$\mathbf{to}$} $n$ }{
         $\mathbf{p}_{t}^{-}=\mathbf{p}_{t-1}^{+}\mathbf{M}$; \\
    \For{\textit{Inferred locaion} $x_j$, $j = 1$ \textbf{$\mathbf{to}$} $n$ }{
        Calculate $d_3(loc_i, loc_j)$; \\
        Location predictability $LP_{i}^{t}$ via (13);\\
        Location Sensitivity $LS_{i}^t$ via (14);\\
        $\lambda _{i,t} = \alpha_1LP_{i}^{t}+\alpha_2LS_{i}^{t}$;\\
        Calculate maximum privacy budget $\epsilon _{\max,t}$ via (22);\\
        $\epsilon _{t} = \min(\epsilon_{t},\epsilon _{\max,t})$;\\
        $\epsilon _r = \frac{\epsilon _t}{n}$;\\
        $\epsilon _{i,t}= \epsilon _r-\lambda _{i,t}\Delta \epsilon$;
    }
}
RETURN $\epsilon _{i,t}$
   \end{small}
   \end{algorithm}

Given the constraint of $\epsilon _{\max,t}$, the privacy budget assigned to the current location is
\begin{equation}
\epsilon _{i,t}=\min\mathrm{(}\epsilon _r-\lambda _{i,t}\Delta \epsilon ,\epsilon _{\max,t}).
\end{equation}

W-APBA not only ensures that the trajectory satisfies the $w$-trajectory sequence differential privacy, but also allocates different privacy budgets to different locations based on location predictability and location sensitivity, which fully meets the personalized needs of users.

As illustrated in Algorithm \ref{algorithm 2},  the overall process of the proposed W-APBA algorithm is summarized. The computational complexity of the algorithm is $O(n^3)$, determined by the number of locations $n$ in the user's trajectory. This relatively low complexity enables the algorithm to perform privacy protection computations efficiently on large-scale trajectory datasets, thereby reducing computational overhead and improving processing efficiency.
\subsection{Differentially Private Mechanism in Protection Location Set}
Based on the allocated privacy budget, we apply the PF mechanism proposed in \cite{10325612} to perturb all the real locations along the trajectory. Furthermore, we demonstrate that the distance between the real and perturbed locations generated by the PF mechanism is constrained by a controllable upper bound, indicated in the following Theorem. Additionally, the algorithmic complexity of PF is $O(n^2)$, influenced by the number of possible user locations at different times along the trajectory\cite{10325612}. This relatively low complexity enables PF to efficiently process large-scale trajectory data while maintaining privacy protection.
\begin{Theorem}
The distance between the user's real location $\boldsymbol{x}_t$ and the perturbed location $\boldsymbol{x}_t^{\prime}$ sampled from the possible location set $\varDelta \chi _t$ by the PF mechanism, with probability at least $1-\psi, (0 \leq \psi \leq 1)$  satisfies the following inequality:
\begin{equation}\label{eq:27}
\setlength{\abovedisplayskip}{0.1cm}
\setlength{\belowdisplayskip}{0.1cm}
\begin{aligned}
d_3(\boldsymbol{x}_t, \boldsymbol{x}_t^{\prime}) \le & \frac{2D(\Phi_t)}{\epsilon} \Bigg[ \ln |\Delta \chi_t| - \frac{\epsilon}{2} - \ln |\Phi_t| - \ln \psi \\
& - \max d_3(\boldsymbol{x}_t, \boldsymbol{x}_t^{\prime}) \Bigg]\hspace{15pt} ( 0 \leq \psi \leq 1).
\end{aligned}
\end{equation}
\end{Theorem}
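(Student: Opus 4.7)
The plan is to derive~(\ref{eq:27}) by combining the per-candidate output-probability bound of the PF mechanism with a union bound over the locations that are far from the real position, and then inverting the resulting tail estimate to obtain a high-probability distance bound. The natural score function driving PF here is $u(y) = -d_3(y,\boldsymbol{x}_t)$, so that candidates close to the real location $\boldsymbol{x}_t$ receive the highest utility and are preferentially released as $\boldsymbol{x}_t'$.

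First, I would bound the sensitivity of $u$ by the PLS diameter, $\Delta u \leq D(\Phi_t)$, via the triangle inequality $|d_3(\boldsymbol{x},\boldsymbol{x}_t') - d_3(\boldsymbol{y},\boldsymbol{x}_t')| \leq d_3(\boldsymbol{x},\boldsymbol{y}) \leq D(\Phi_t)$ for any $\boldsymbol{x},\boldsymbol{y}\in\Phi_t$; this reproduces the calculation already sketched in the commented sensitivity paragraph and is consistent with~(\ref{eq:22}). Next, I would invoke the PF output-distribution upper bound $\Pr[M(\boldsymbol{x}_t)=y] \leq \exp\bigl(\epsilon(u(y)-u^{*})/(2\Delta u)\bigr)$ established in the PF analysis of~\cite{10325612}, and apply a union bound over the ``bad'' candidates $\{y\in\Delta\chi_t : d_3(y,\boldsymbol{x}_t)>s\}$ to obtain $\Pr[d_3(\boldsymbol{x}_t,\boldsymbol{x}_t')>s] \leq |\Delta\chi_t|\exp\bigl(-\epsilon s/(2D(\Phi_t))\bigr)$ after absorbing $u^{*}$ into the constant. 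Setting the right-hand side equal to $\psi$ and solving for $s$ immediately yields the leading $(2D(\Phi_t)/\epsilon)(\ln|\Delta\chi_t|-\ln\psi)$ contribution of~(\ref{eq:27}). The remaining correction terms---$-\tfrac{\epsilon}{2}$, $-\ln|\Phi_t|$, and $-\max d_3(\boldsymbol{x}_t,\boldsymbol{x}_t')$---should fall out of a sharper bookkeeping step that separates the ``good'' candidates in $\Phi_t$ from the rest of $\Delta\chi_t$: because most of PF's output mass concentrates on the $|\Phi_t|$ high-score candidates inside the PLS, the effective size of the bad pool shrinks by a factor $|\Phi_t|$, while normalizing by the optimum $u^{*}$ introduces an offset by the maximum attainable distance, together with a constant $\tfrac{\epsilon}{2}$ shift that arises from absorbing PF's lower-bound probability on the optimal set into the denominator.

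The main obstacle I anticipate is handling this two-layer structure consistently: PF samples from the outer pool $\Delta\chi_t$, yet its sensitivity and most of its output mass are controlled by the inner PLS $\Phi_t$, and care is required to apportion the log-cardinalities, the $\psi$-dependence, and the $u^{*}$ offset between these two sets without double-counting, and to keep the direction of inequalities straight when converting from a utility tail bound to a distance upper bound. A secondary subtlety is that only the \emph{upper} bound on $\Pr[M(\boldsymbol{x}_t)=y]$ drives the concentration argument here; the matching lower bound is what certified $\epsilon$-differential privacy in the earlier PF analysis but plays no direct role in the utility statement~(\ref{eq:27}). Once these accounting issues are unwound using $\Delta u \leq D(\Phi_t)$, the claimed inequality should follow.
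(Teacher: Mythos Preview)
Your proposal follows essentially the same route as the paper: bound the per-candidate output probability of PF, take a union bound over the ``bad'' set $\{y\in\Delta\chi_t:d_3(\boldsymbol{x}_t,y)\ge a\}$, set the tail probability equal to $\psi$, and invert for $a$. The paper's argument is exactly this, and your identification of $\Delta u\le D(\Phi_t)$ and of the $\max d_3$ offset as coming from the shifted utility $u-u^{*}$ are both correct.

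The one place where your bookkeeping explanation drifts is the origin of the $-\tfrac{\epsilon}{2}$ and $-\ln|\Phi_t|$ terms. They do \emph{not} come from shrinking the bad pool or from a separate good/bad decomposition; both arise together from a single bound on the PF normalization constant, namely $w_{\boldsymbol{x}}\le e^{-\epsilon/2}/|\Phi_t|$, which the paper imports directly from~\cite{10325612}. Concretely, the paper writes the selection probability as $w_{\boldsymbol{x}}\exp\bigl(-\epsilon(d_3-\max d_3)/(2D(\Phi_t))\bigr)$, union-bounds to get $w_{\boldsymbol{x}}|\Delta\chi_t|\exp\bigl(-\epsilon(a-\max d_3)/(2D(\Phi_t))\bigr)$, and only then substitutes the normalization-constant bound. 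So rather than separating $\Phi_t$ from $\Delta\chi_t$ at the union-bound stage, you should keep the normalizer $w_{\boldsymbol{x}}$ explicit through the union bound and appeal to the cited estimate at the end; that single step delivers all three correction terms at once and avoids the double-counting concern you raise.
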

\begin{proof}
\vspace{-0.2cm}
The probability of a location being selected from $\varDelta \chi _t$ is given by\cite{cao2024protecting}
\begin{equation}\label{eq:28}
w_{\boldsymbol{x}}\exp \bigl( \frac{-\epsilon \left( d_3\left( \boldsymbol{x}_t,{\boldsymbol{x}_t}^{\prime} \right) -\max d_3\left( \boldsymbol{x}_t,{\boldsymbol{x}_t}^{\prime} \right) \right)}{2D(\Phi _t)} \bigr),
\end{equation}
where $w_{\boldsymbol{x}}$ is the probability distribution normalization factor.
Based on (\ref{eq:28}), we derive an upper bound on the total probability of the output perturbed locations that satisfy $d_3\left( \boldsymbol{x}_t,{\boldsymbol{x}_t}^{\prime} \right) \geqslant a,\left( a\geqslant 0 \right) $ within $\varDelta \chi _t$, i.e.,

\begin{small}
\begin{equation}
 w_{\boldsymbol{x}}|\varDelta \chi _t|\exp \bigl( \frac{-\epsilon \left( a-\max d_3\left( \boldsymbol{x}_t,{\boldsymbol{x}_t}^{\prime} \right) \right)}{2D(\Phi _t)} \bigr).
\end{equation}
\end{small}

According to\cite{10325612}, for any $\boldsymbol{x}_t^{\prime}$, $w_{\boldsymbol{x}}\leqslant \frac{e^{\frac{-\epsilon}{2}}}{\left| \Phi _t \right|}$. Thus, the above equation becomes
\begin{small}
\begin{equation}
\le \frac{|\varDelta \chi _t|e^{-\frac{\epsilon}{2}}}{|\Phi _t|}\exp \left( \frac{-\epsilon a-maxd_3\left( \boldsymbol{x}_t,{\boldsymbol{x}_t}^{\prime} \right)}{2D(\Phi )} \right).
\end{equation}
\end{small}
Let the right-hand side be $\psi$, we have:
\begin{small}
\begin{equation}
a=\frac{2D(\Phi )}{\epsilon}\left[ \left( \ln |\varDelta \chi _t|-\ln \psi -\ln |\Phi _t|-\frac{\epsilon}{2} \right) -\max d_3\left( \boldsymbol{x}_t,{\boldsymbol{x}_t}^{\prime} \right) \right].
\end{equation}
\end{small}
Therefore, for any $\boldsymbol{x}_t^{\prime}$ sampled from $\varDelta \chi _t$ by the PF mechanism, with probability at least $1-\psi$, we have (\ref{eq:27}).
 \vspace{-0.2cm}
 \end{proof}
 \begin{figure}[t]
    \centerline{\includegraphics[width=0.35\textwidth]{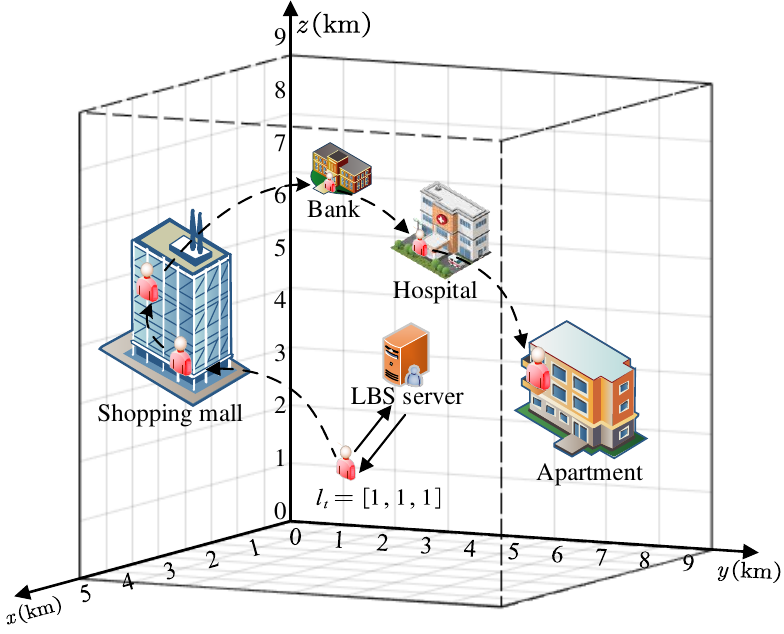}}
    \caption{Simulation setting of the trajectory of a user.}
    \label{fig5}
    \vspace{-0.1cm}
\end{figure}

\begin{figure*}[hbpt]
\vspace{0cm}                  
\setlength{\abovecaptionskip}{0cm}  
\setlength{\belowcaptionskip}{0cm} 
\centering
\subfigure[Trajectory privacy v.s. $\epsilon _w$ under same $E_m$]{
\includegraphics[height=4.2cm,width=5.8cm]{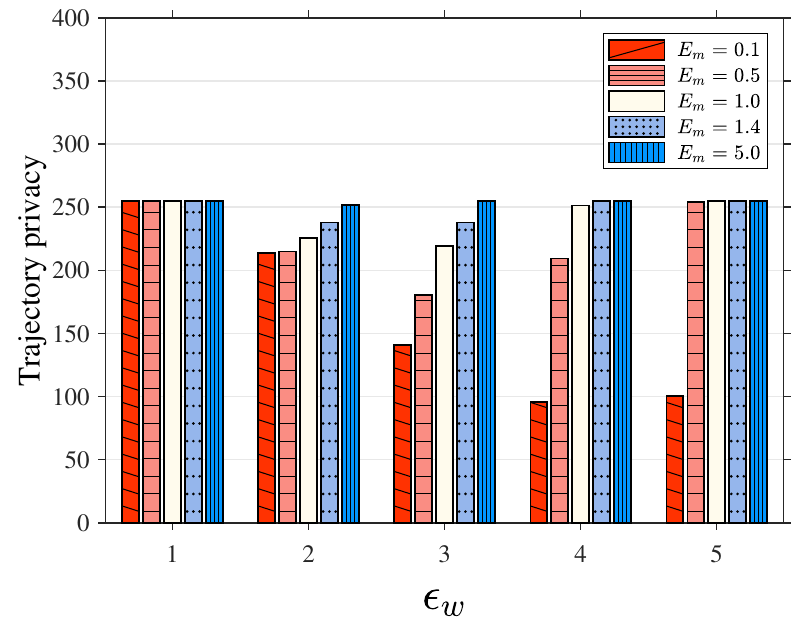} \label{fig6a}
}
\quad \quad \quad
\subfigure[QoS loss v.s. $\epsilon _w$ under same $E_m$]{
\includegraphics[height=4.2cm,width=5.8cm]{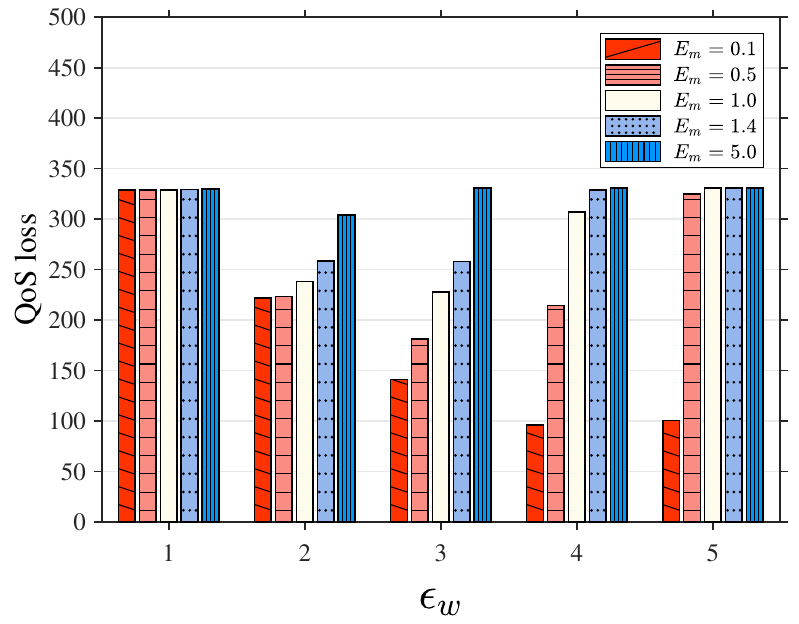} \label{fig6b}
}
\quad \quad \quad
\subfigure[Trajectory privacy v.s. $E_m$ under same $\epsilon _w$]{
\includegraphics[height=4.2cm,width=5.8cm]{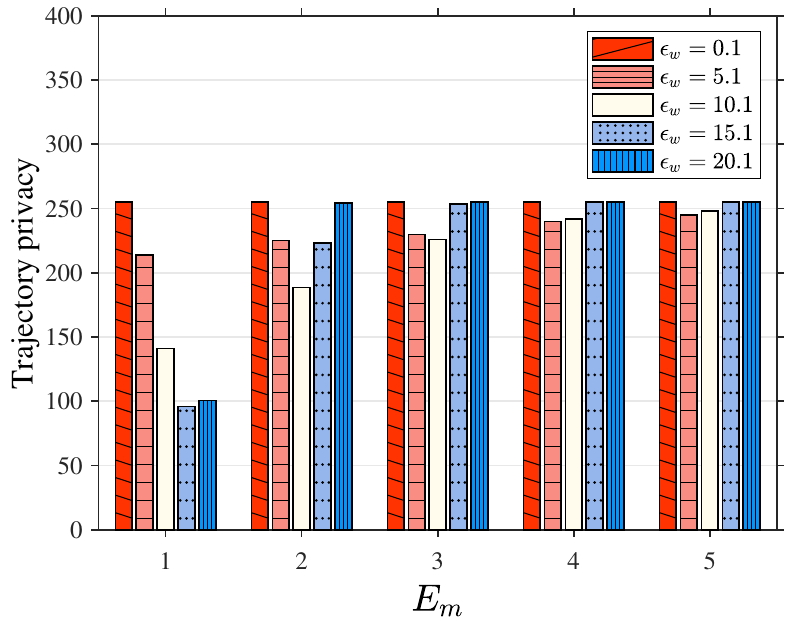} \label{fig6c}
}
\quad \quad \quad
\subfigure[QoS loss v.s. $E_m$ under same $\epsilon _w$]{
\includegraphics[height=4.2cm,width=5.8cm]{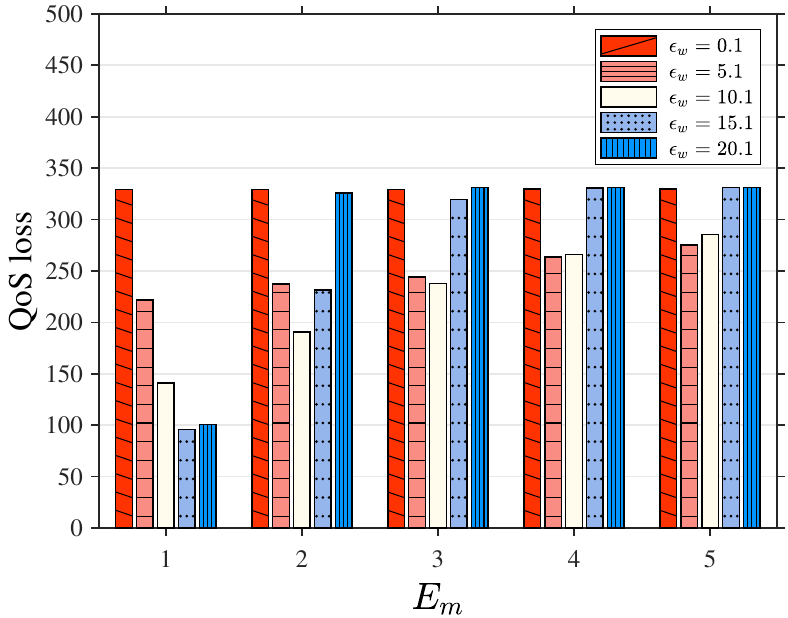} \label{fig6d}
}
\quad
\caption{Impact of $\epsilon _w$ and $E_m$ on personalized trajectory privacy protection.}
\label{fig6}
\end{figure*}
 \begin{remark}
The PF mechanism achieves a great balance between trajectory privacy and QoS by bounding the maximum perturbation distance from the real location. As demonstrated in (\ref{eq:27}), this upper bound is jointly determined by $\epsilon$ and $E_m$, which affect the diameter of the PLS. Therefore, by adjusting the values of $\epsilon$ and $E_m$, the mechanism exhibits flexibly to users’ personalized privacy needs while maintaining QoS.
\end{remark}
\section{Simulations Results}\label{SIMULATION RESULTS}
\vspace{-0.2cm}
\subsection{Simulation Setup}
In this section, we analyze the impact of privacy parameters on the performance of our proposed mechanism 3DSTPM. Additionally, we compare the trajectory privacy and QoS loss performance of 3DSTPM with P3DLPPM\cite{10325612} and 2DPTPPM\cite{cao2024protecting} under attacks from attackers possessing spatiotemporal correlation knowledge. Moreover, the PIM  mechanism in\cite{xiao2015protecting} is also modified and applied in the 3D space, named 3DPIM, as a benchmark.

For the simulation, we divide a 10m × 10m × 10m 3D space into 512 cells, with each cell representing a potential 3D location of the user. We then randomly select 50 of these cells as the user's real locations, and the probability of the user reaching each of these locations is treated as the attacker's prior distribution. This setup simulates the attacker's initial knowledge of the user's location. Fig. \ref{fig5} depicts the trajectory of the user at five sequential moments.

The trajectory privacy $p$ and QoS loss $q$ are evaluated according to the metrics described in our previous work\cite{10325612}, which are given by
\begin{equation}
p=\sum_{\boldsymbol{x}_t,\boldsymbol{x}_{t}^{\prime},\hat{\boldsymbol{x}}_t\in \mathcal{A}}{\mathrm{Pr}\left( \boldsymbol{x}_t \right) f\left( \boldsymbol{x}_{t}^{\prime}\mid \boldsymbol{x}_t \right) h\left( \hat{\boldsymbol{x}}_t\mid \boldsymbol{x}_{t}^{\prime} \right) d\left( \boldsymbol{x}_t,\hat{\boldsymbol{x}}_t \right) ,}
\end{equation}
\begin{equation}
q=\sum_{\boldsymbol{x}_t,\boldsymbol{x}_{t}^{\prime}\in \mathcal{A}}{\mathrm{Pr}\left( \boldsymbol{x}_t \right) f\left( \boldsymbol{x}_{t}^{\prime}\mid \boldsymbol{x}_t \right) d\left( \boldsymbol{x}_t,\boldsymbol{x}_{t}^{\prime} \right) .}
\end{equation}

\subsection{Impact of Privacy Parameters on Personalized Trajectory Privacy Protection}
By setting different privacy budgets $\epsilon$ and inference error thresholds $E_m$, we evaluate the impact of various privacy parameters on the privacy protection performance of 3DSTPM. As illustrated in Fig. \ref{fig6}, it can be seen that the two privacy parameters $\epsilon$, $E_m$ have a significant impact on both trajectory privacy and QoS loss.
\begin{figure*}[t]
\centering
\subfigure[Trajectory privacy v.s. $w$ under same $\epsilon _w$]{
\includegraphics[height=4.2cm,width=5.8cm]{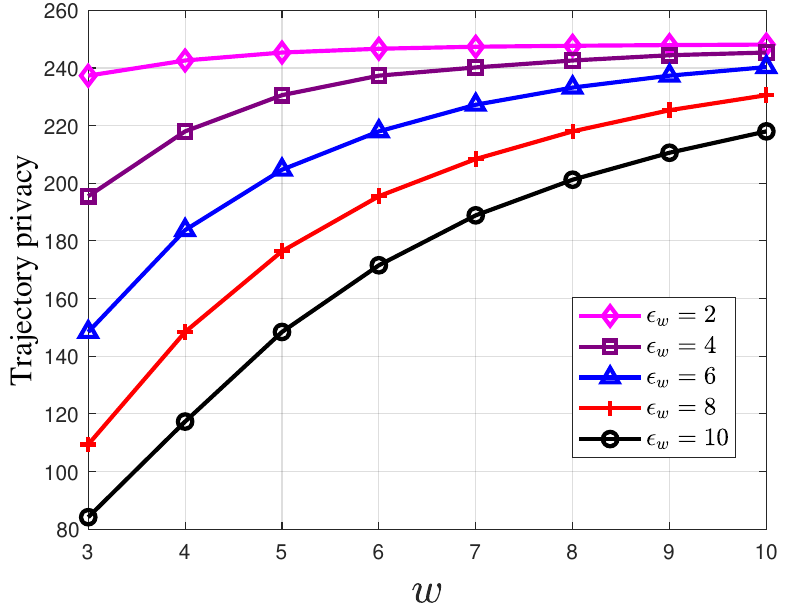} \label{fig7a}
}
\quad \quad \quad
\subfigure[QoS loss v.s. $w$ under same $\epsilon _w$]{
\includegraphics[height=4.2cm,width=6cm]{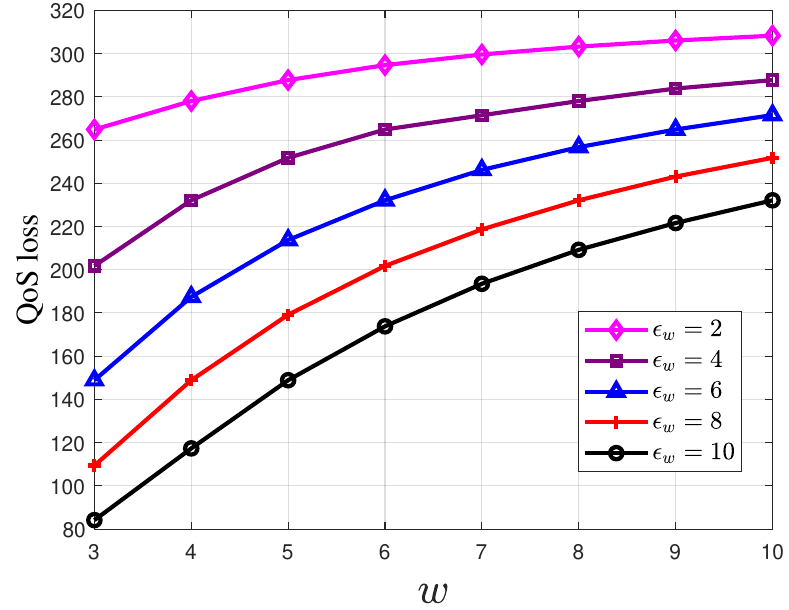} \label{fig7b}
}
\quad \quad \quad
\subfigure[Trajectory privacy v.s. $w$ under same $w$]{
\includegraphics[height=4.2cm,width=5.8cm]{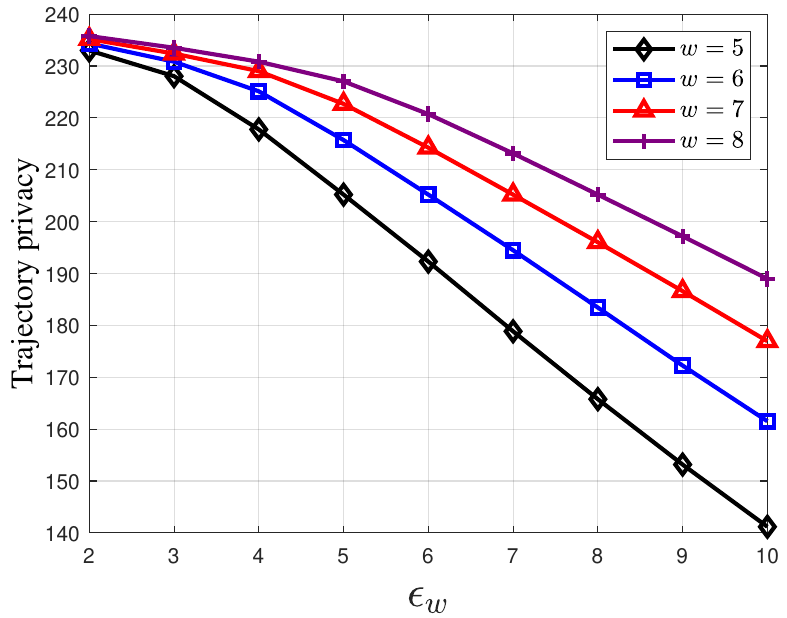} \label{fig7c}
}
\quad \quad \quad
\subfigure[QoS loss v.s. $w$ under same $w$]{
\includegraphics[height=4.2cm,width=5.8cm]{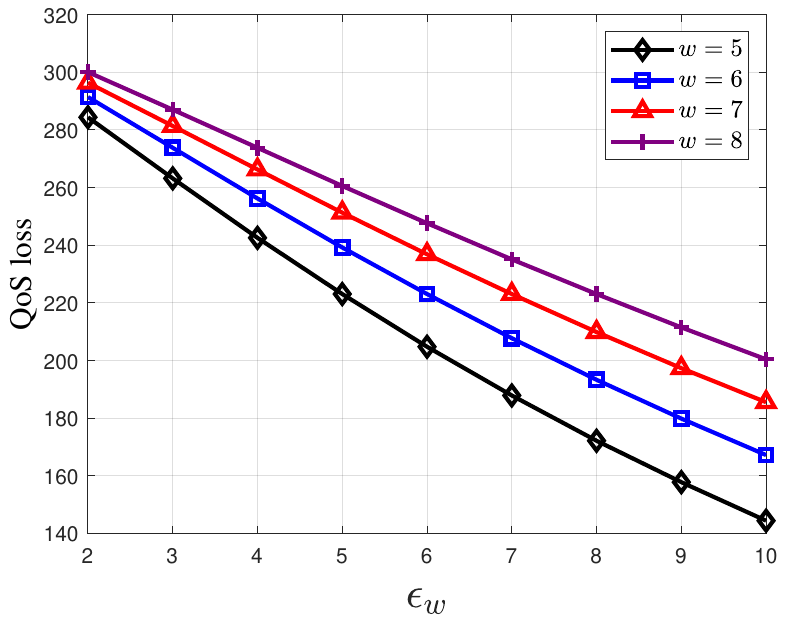} \label{fig7d}
}
\quad
\caption{Impact of W-APBA parameters on personalized trajectory privacy protection.}
\label{fig7}
\end{figure*}
\begin{figure}[t]
\begin{center}
\includegraphics[height=4.6cm,width=5.8cm]{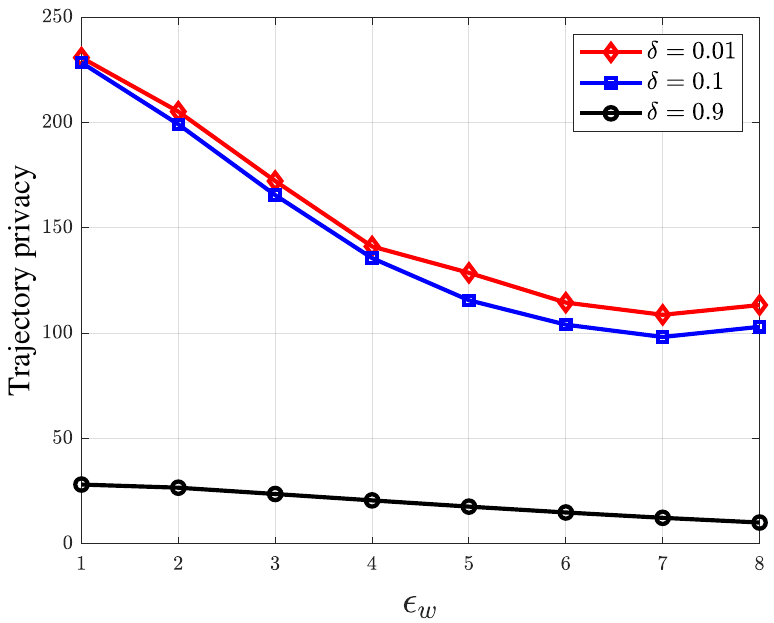}
\end{center}
\caption{Impact of probability thresholds $\delta$ on personalized trajectory privacy protection.}
\label{fig8}
\end{figure}

Specifically, from $e^{\epsilon}E_m\le \mathrm{D(}\Phi _t)$, it can be seen that the impact of $\epsilon$ on $\mathrm{D}\left( \Phi _t \right) $ exhibits exponential growth, while the impact of $E_m$ on $\mathrm{D}\left( \Phi _t \right) $ is linear. As shown in Figs. \ref{fig6a} and \ref{fig6b}, when $E_m$ is fixed and $\epsilon_w$ is small, increasing $\epsilon_w$ leads to a decrease in ${{D(\Phi _t)}/{\epsilon _w}}$,
 which, in turn, reduces the distance between the real location and the perturbed location. Consequently, both trajectory privacy and QoS loss gradually decrease. Moreover, as the value of $E_m$ increases, the trajectory privacy and QoS loss also increase. With a further increase in $\epsilon_w$, the diameter of the protection region increases sharply, causing ${{D(\Phi _t)}/{\epsilon _w}}$ to increase, which results in a larger distance between the real and perturbed locations. This effect varies depending on the settings of $E_m$. Nonetheless, in practical scenarios, the diameter of the protection region cannot increase indefinitely, so both trajectory privacy and QoS loss will eventually converge to an upper limit.

Figs. \ref{fig6c} and \ref{fig6d} illustrate the impact of $E_m$ on trajectory privacy and QoS loss under different values of $\epsilon_w$. Given a specific $\epsilon_w$, as $E_m$ increases, the diameter of the protection region increases, leading to higher trajectory privacy and QoS loss. However, when $\epsilon_w$ is sufficiently large, its effect on $\mathrm{D}\left( \Phi _t \right) $ is far greater than that of $E_m$. Thus, when $\epsilon_w=2.5$, an increase in $E_m$ causes a significant change in the diameter of the protection region, the distance between the user's real location and the perturbed location increases, resulting in a steep increase in trajectory privacy and QoS loss, making the curve steep. In practical applications, however, the diameter of the protection region is limited, so trajectory privacy and QoS loss will converge to a finite value. Additionally, an excessively small privacy budget allocation is insufficient to ensure an acceptable level of QoS loss. Therefore, regardless of the $E_m$ setting, when $\epsilon_w=0.1$, trajectory privacy and QoS loss reach their maximum values.

In Fig. \ref{fig7}, we explore the impact of the window size $w$ on trajectory privacy and QoS loss. As shown in Figs. \ref{fig7a} and \ref{fig7b}, when $\epsilon_w$ is fixed, the privacy budget allocated to each window decreases as $w$ increases. This leads to an increase in ${{D(\Phi _t)}/{\epsilon _w}}$, as shown in (\ref{eq:27}) in Theorem 1, the distance between the user's real location and the perturbed location increases, leading to an increase in trajectory privacy. For instance, when $\epsilon_w=8$, the trajectory privacy protection value is 176.41 with $w=5$, representing an 18.87\% increase compared to the value at $w=4$. Furthermore, the increase in the distance between the real location and the perturbed location also leads to a higher QoS loss. For example, the QoS loss value when $w=5$ is 20.34\% higher than the QoS loss value at $w=4$. Similarly, Figs. \ref{fig7c} and \ref{fig7d} demonstrate that when $w$ is fixed, the privacy budget allocated to each window increases as $\epsilon_w$ increases. This leads to a larger range of perturbed locations, gradually increasing the distance between the real location and the perturbed location, thereby reducing trajectory privacy while decreasing QoS loss.

Fig. \ref{fig8} illustrates the impact of the probability threshold $\delta$ on trajectory privacy $p$. As depicted in Fig. \ref{fig8}, trajectory privacy $q$ decreases as $\delta$ increases. When $\delta$ continues to rise towards 1, trajectory privacy experiences a significant decline, and the user's privacy can no longer be effectively protected. This occurs because, as $\delta$ increases, the number of possible user locations decreases, boosting the success rate of the attacker's inference. Consequently, trajectory privacy diminishes. When $\delta$ nears 1, the number of possible user locations can drop to just one, causing the attacker's inference success rate to soar, and the user's trajectory privacy is almost entirely lost, making it impossible to ensure adequate protection.
\subsection{Performance of Different Trajectory Privacy Protection Mechanisms}
Next, we quantitatively compare 3DSTPM with P3DLPPM, 3DPIM and 2DPTPPM in terms of trajectory privacy and QoS loss to verify its advantages.
\begin{figure}[t]
\centering
\subfigure[Comparison of different mechanisms performance on Trajectory privacy.]{
\includegraphics[height=4.2cm,width=5.8cm]{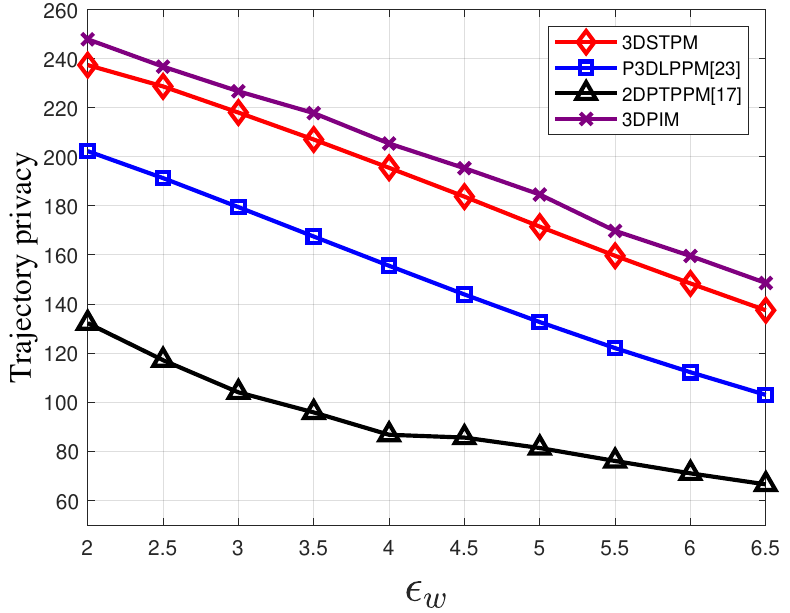} \label{fig9a}
}
\quad \quad \quad
\subfigure[Comparison of different mechanisms performance on QoS loss.]{
\includegraphics[height=4.2cm,width=5.8cm]{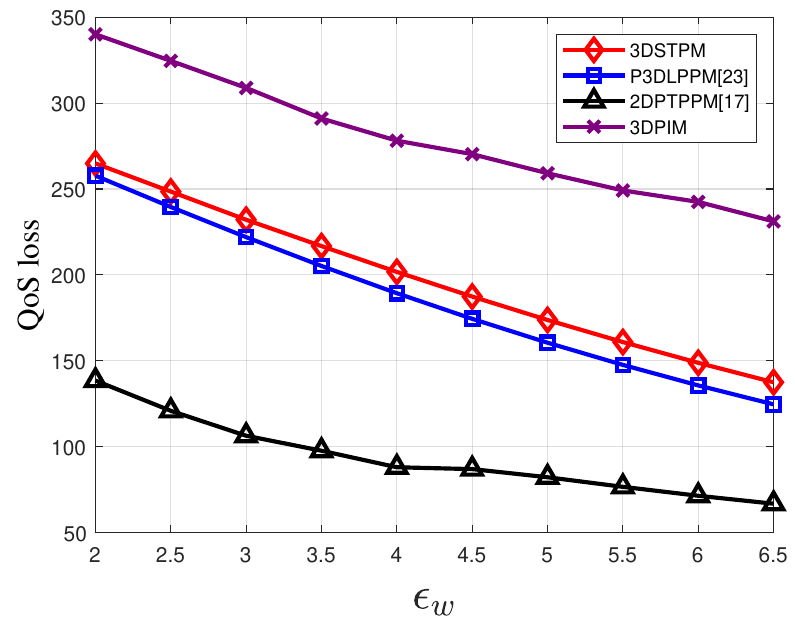} \label{fig9b}
}
\quad
\caption{Comparison of the performance of different mechanisms.}
\label{fig9}
\end{figure}

As illustrated in Fig. \ref{fig9}, 3DSTPM exhibits significant advantages over 2DPTPPM. When facing the same attacker with spatiotemporal correlation knowledge, 2DPTPPM demonstrates lower QoS loss but significantly falls short in trajectory privacy value compared to 3DSTPM, making it inadequate to meet the trajectory privacy protection requirements in 3D space. For instance, when $\epsilon_w=4.0$, the trajectory privacy value of 3DSTPM is approximately 2.3 times higher than 2DPTPPM. This disparity arises because 2DPTPPM perturbs only the horizontal coordinates of trajectory locations, overlooking height information, which can also pose a risk of privacy leakage. In contrast, 3DSTPM not only safeguards the horizontal trajectory but also protects the height data, which offers a higher level of privacy protection.

Fig. \ref{fig9} also shows that under similar QoS levels, 3DSTPM demonstrates a more significant advantage in trajectory privacy compared to P3DLPPM and 3DPIM. For the same $\epsilon_w$, the QoS loss of 3DSTPM is slightly higher than that of P3DLPPM, but the improvement in trajectory privacy significantly outweighs the increase in QoS loss. For instance, when $\epsilon_w=4.5$, the privacy level of 3DSTPM improves by 21.68\% compared to P3DLPPM, while the QoS loss increases by only 6.91\%. This advantage arises because P3DLPPM does not account for the spatiotemporal correlation between locations on the trajectory, instead offering protection solely for individual locations along the trajectory. In contrast, 3DSTPM enhances privacy protection by incorporating spatiotemporal correlations between trajectory locations and dynamically updating the prior probability of a user's location using the location transition probability matrix. Additionally, although the privacy level of 3DSTPM is slightly lower than that of 3DPIM, its QoS loss is significantly reduced. For instance, when $\epsilon_w=4.5$, the trajectory privacy of 3DSTPM is 6.30\% lower than that of 3DPIM, while the QoS loss is reduced by 30.66\%. The reason is that 3DSTPM employs the PF mechanism to perturb different locations along the trajectory, providing a smaller perturbation distance while guaranteeing privacy needs. As a result, it ensures trajectory privacy while minimizing QoS loss.

\section{Conclusion}\label{CONCLUSION}
In this paper, we have proposed a personalized 3D spatiotemporal trajectory privacy protection mechanism 3DSTPM. This paper has two novel contributions. First, we overcome the challenges posed by the high dimensionality of 3D space, consider the spatiotemporal correlations between locations on the trajectory, and combine the complementary characteristics of 3D-GI and expected error, expanding the applicability of the mechanism while enhancing its robustness. Second, we have proposed a W-APBA, which dynamically adjusts the privacy budget to provide more flexible privacy protection for users. We have also demonstrated that the distance between the real and perturbed locations generated by PF is constrained by a controllable upper bound, further reinforcing the theoretical foundation of the PF mechanism. Simulation results demonstrate that 3DSTPM significantly outperforms benchmarks in defending against spatiotemporally correlated attackers. For instance, when $\epsilon_w=5.0$, 3DSTPM achieves a privacy level that is 2.1 times higher than 2DPTPPM and 29\% higher than P3DLPPM, with only an 8\% increase in QoS loss compared to P3DLPPM. Moreover, although 3DSTPM has a 7.08\% lower privacy level than 3DPIM, its QoS loss is reduced by 32.94\%.

\begin{spacing}{1.0}
\balance
\bibliography{reference}
\bibliographystyle{ieeetr}
\end{spacing}

\end{document}